\documentclass[runningheads]{llncs}
\usepackage[T1]{fontenc}
\usepackage{graphicx,amsmath,amsfonts,amssymb,amsthm,geometry,afterpage} % Required for inserting images
\usepackage{float,dblfloatfix}
\usepackage{authblk}
\usepackage{booktabs,tabularx,adjustbox,rotating}
\usepackage{algorithm}
\usepackage{algpseudocode}
\usepackage{longtable}
\usepackage{xcolor}   % to colour a line
\usepackage{amssymb}
\usepackage{graphicx,mathtools} 

\newcommand{\E}{\mathbb{E}}
\newcommand{\R}{\mathbb{R}}
\newcommand{\N}{\mathcal{N}}
\newcommand{\I}{\mathcal{I}}
\newcommand{\Z}{\mathcal{Z}}
\newcommand{\Hset}{\mathcal{H}}
\newcommand{\mathbbm}[1]{\mathbf{#1}}
\begin{document}
\newcolumntype{C}{>{\centering\arraybackslash}p{1em}}
\newcolumntype{D}{>{\centering\arraybackslash}p{1.3em}}
\title{A Statistical Analysis of Three Player Auction Bridge}
%
%\titlerunning{Abbreviated paper title}
% If the paper title is too long for the running head, you can set
% an abbreviated paper title here
%
\author {Aritrabha Majumdar\inst{1}\orcidID{0009-0003-9149-0729} \and
Sourish Sarkar\inst{2}\orcidID{0009-0002-1887-0007} \and
Moutushi Chatterjee\inst{3}\orcidID{0000-0003-4848-1627}}
\authorrunning{A.Majumdar et al.}
% First names are abbreviated in the running head.
% If there are more than two authors, 'et al.' is used.
%
\institute{Indian Statistical Institute, Kolkata \email{aritrabhamajumdar.math@gmail.com}\and
Indian Statistical Institute, Bangalore
\email{sourish.sarkar13@gmail.com} \and
Indian Statistical Institute, Bangalore
\email{tushi.stats@gmail.com}}
\maketitle  
\begin{abstract}
Three-player Auction Bridge is a finite imperfect-information game in which dynamic partnerships create a distinctive interaction between scoring, strategic incentives, and payoff distribution. This paper develops a unified statistical and game-theoretic framework to evaluate a traditional scoring rule against a modified mechanism designed to improve strategic incentives. We first identify a structural defect in the original scheme: a bid-invariant slam bonus can make lower contracts strictly more attractive than higher ones. A bid-dependent correction removes this incentive distortion. We then compare the two schemes using higher-moment analysis, fairness measures, Nash equilibrium analysis, and General-Sum Counterfactual Regret Minimization (GS-CFR). The results show significant differences in the shape of the payoff distributions and reveal that, although both schemes remain highly balanced across physical seats, the modified scheme substantially increases inequality across strategic roles. The game is shown to be neither zero-sum nor constant-sum, motivating a general-sum rather than minimax analysis. Full-game GS-CFR further indicates a substantial increase in the bidder's payoff under the modified scheme, accompanied by a reduction in defender payoff and a shift from separating to partially pooling bidding behavior. These findings demonstrate that scoring-rule design can fundamentally reshape incentives, payoff distribution, and information transmission in imperfect-information games. The proposed framework provides a systematic approach to evaluating such mechanisms from both statistical and strategic perspectives.
\keywords{Three-Player Auction Bridge \and Scoring Mechanism \and Statistical Analysis \and Imperfect-Information Games \and General-Sum Games \and Nash Equilibrium \and Counterfactual Regret Minimization \and Fairness Analysis}
\end{abstract}

\section{Introduction}
The three-player auction bridge is a variant of traditional auction bridge. This new variant is different from the traditional three-player variants like cutthroat bridge as this game presented in Sarkar et. al~\cite{ourpaper} has a completely new point scheme to make this game more exciting and fun to play. 

Unlike conventional auction bridge, the three-player variant considered
here does not impose fixed partnerships. The winning bidder forms a
temporary partnership with the exposed dummy, while the remaining two
players act as defenders. Consequently, the auction simultaneously
determines the contract and the strategic roles of the players. In this innovative format, three individual players compete independently, and because the fourth hand acts as an automated blind dummy, the winning bidder enters a total blackbox---forced to bid and commit to a contract completely blind to their partner's actual hand. Unlike standard national and international matches played between two fixed teams, this setup pits the solitary declarer and their mysterious dummy against the remaining two defending players, turning every single game into a high-pressure scenario where two players are effectively knocked out of individual victory. To balance this extreme uncertainty, this paper introduces a novel, high-risk, high-reward point scheme tailored specifically for blind cooperation.

Whitehead~\cite{whitehead1930auction} was among the first to document principles and game rules for auction bridge. Delooze and Downey~\cite{inproceedings} discussed the complexity of the auction bridge game as a multi-player game
and proposed a neural network-based self-organizing map for bidding the so-called `No Trump hand’ effectively. A further analysis and a polynomial-time bidding algorithm for no trump bidding was explicitly stated and discussed in Sarkar et. al.~\cite{sarkar20263playersauctionbridge}.

An instance of three player bridge game was analyzed by Coffin~\cite{coffin1956contract}, where the point scheme derived is inspired from contract bridge, which is different from auction bridge in both point scheme and post-bidding gameplay mechanics.
The scoring scheme recommended by Sarkar et. al.~\cite{ourpaper} encourages a completely different set of gameplay strategies which promotes \emph{high risk - high reward} policy. The order of bidding (suitwise) is as follows:
\[
\clubsuit \ (\text{Clubs}), \ \diamondsuit \ (\text{Diamonds}), \ \heartsuit \ (\text{Hearts}), \ \spadesuit \ (\text{Spades}), \ \text{No Trump},
\]
Bidding starts from 1 Clubs (i.e, seven tricks has to be taken with Clubs being the trump suit) and ends with 7 No Trump (i.e all the thirteen tricks has to be taken with no trump suit available). Bidding procedure mostly follows the traditional bridge bidding, but it happens between three players only. The \emph{fourth hand} or the \emph{dummy hand} joins the game as a partner of the highest bidder, with the cards exposed. Thus, the bidding phase in each game decides \emph{partners} and \emph{opponents} dynamically in stead of a predefined fixture.

In this paper, we begin by giving a brief overview of the point scheme discussed in Sarkar et. al.~\cite{ourpaper}. We compare the traditional and modified scoring schemes in terms of
skewness and excess kurtosis using paired large-sample inference. We compare the traditional and modified scoring schemes in terms of
skewness and excess kurtosis using paired large-sample inference. Unlike conventional auction bridge, the three-player variant considered
here does not impose fixed partnerships. The winning bidder forms a
temporary partnership with the exposed dummy, while the remaining two
players act as defenders. Consequently, the auction simultaneously
determines the contract and the strategic roles of the players.

\section{Formalizing The Model and The Scoring Scheme}
We model the bidding-and-doubling stage as a finite extensive-form
game with imperfect information,
\[
\Gamma = \big(\N,\ \Hset,\ P,\ f_c,\ \{\I_i\}_{i\in\N},\ \{u_i\}_{i\in\N}\big),
\]
with player set $\N=\{1,2,3\}$. $\Hset$ is the set of histories
(sequences of actions from the root), $\Z\subset\Hset$ the terminal
histories, $P:\Hset\setminus\Z \to \N\cup\{c\}$ the acting-player
function ($c$ denotes chance), $f_c(\cdot\mid h)$ the chance
distribution at chance histories, $\I_i$ a partition of
$\{h : P(h)=i\}$ into information sets, and
$u_i:\Z\to\R$ player $i$'s payoff function.
\\
\newline
Now it is interesting to observe that the game has two structurally distinct source of randomness. 

\begin{definition}[Deal chance node]
At the root, nature draws private types
$\theta = (\theta_1,\theta_2,\theta_3) \in \{-1,0,1\}^3$ i.i.d.\
uniform, revealing $\theta_i$ to player $i$ only. Write
$s(\theta) = \tfrac{1}{3}\sum_i \theta_i$ for the \emph{deal strength}.
\end{definition}

\begin{definition}[Decision stage]
Players bid in order $1,2,3$ from
$A_{\text{bid}} = \{0,7,9,11,12,13\}$ (0 = pass), each choosing an
action conditioned only on their own information set (own type and
the public bid history so far). Let $b^\star(h)=\max_i a_i$ and
$\beta(h)=\arg\max_i a_i$ (first index on ties) denote the winning bid
and the bidder once all three bids are fixed, provided
$b^\star(h) > 0$; if all bids are $0$ the history is terminal with
$u_i \equiv 0$ for all $i$. Otherwise the two non-bidders, in index
order, each choose $d\in\{N,D\}$ (double / no double), conditioned on
their own type, the public bids, and (for the second) the first
defender's decision.
\end{definition}

\begin{definition}[Outcome chance node]
\label{def:trickchance}
Let $h$ be a history at which all bids and doubling decisions are
fixed. Nature draws the number of tricks actually taken,
$T \in \{0,1,\dots,13\}$, according to a distribution
$P(\cdot \mid s(\theta))$ that depends on $h$ \emph{only through the
deal strength} $s(\theta)$ - not through $b^\star(h)$, $\beta(h)$, or
the doubling decisions. We take
\begin{equation}
\label{eq:trickdist}
P(T=t \mid s) \;=\; \frac{\exp\!\big(-\tfrac{1}{2}\big(\tfrac{t-\mu(s)}{\sigma}\big)^2\big)}
{\sum_{t'=0}^{13}\exp\!\big(-\tfrac{1}{2}\big(\tfrac{t'-\mu(s)}{\sigma}\big)^2\big)},
\qquad \mu(s) = \max(0,\min(6.5+1.1s, 13)),\ \sigma=1.4 .
\end{equation}
\end{definition}

The independence structure in Definition \ref{def:trickchance} is not
a modeling nicety; it is the defining property of the underlying game
being modeled. In contract bridge, the number of tricks a partnership
can take is a property of the cards and the play, not of what was
announced during the auction. Any implementation in which
$P(T\mid h)$ depends on $b^\star(h)$ is not solving this game; it is
solving a different game in which the auction causally affects the
lie of the cards, i.e, games like \emph{Hearts}.
\begin{definition}[Trump suit and trick value]
A contract now consists of a pair $(b,\mathrm{suit})$, with $b\in\{7,\dots,13\}$ the bid trick target as before and $\mathrm{suit}\in\{\clubsuit,\diamondsuit,\heartsuit,\spadesuit,\mathrm{NT}\}$ the trump denomination named together with the level, in place of the suit-free $b\in A_{\text{bid}}$ of Definition 2. Write $v(\mathrm{suit})>0$ for the fixed per-trick value of that denomination. Its specific entries play no role below beyond $v>0$, so we do not restate them; where unambiguous we write $v$ for $v(\mathrm{suit}(h))$ at the history in question.
\end{definition}

\begin{definition}[Doubling multiplier, extended to redouble]
Replace $m(h)\in\{1,2\}$ of the original scoring definition with $D(h)\in\{1,2,4\}$ (undoubled / doubled / redoubled respectively). This requires extending the doubling sub-stage of Definition 1.2 so that, after a defender doubles, the declarer - not previously modeled
as acting in that sub-stage - may redouble.
\end{definition}

Let $t$ denote the number of tricks taken by the declarer's side -
the declarer together with a non-strategic dummy partner whose hand
the declarer controls; only the declarer is paid a score, the dummy is
not a separate strategic agent and is not part of $\N$. Write
$\beta=\beta(h)$ for the declarer and $d_1,d_2$ for the two defenders,
as before. Made ($t\ge b$) and down ($t<b$) remain mutually exclusive.

\textbf{Insult bonus.} $I(D) = 0$ if $D=1$, $25$ if $D=2$, $50$ if
$D=4$ - payable on a made contract regardless of overtricks or slam.

\textbf{Undertrick penalty rate.} $\rho(D) = 25D$, i.e.\ $25,50,100$
for $D=1,2,4$ respectively.

\textbf{Slam bonus.} $\phi(b,t)\cdot D$, with the corrected
\begin{equation}
\label{eq:goodphi-early}
\phi(b,t) \;=\; 100\cdot\mathbbm{1}[b=13,\,t=13] \;+\; 50\cdot\mathbbm{1}[b=12,\,t\ge12]
\end{equation}
as derived in \S3 below (the erroneous
$\phi_{\text{old}}(t)=100\cdot\mathbbm1[t=13]+50\cdot\mathbbm1[t=12]$,
independent of $b$, is retained only as the subject of Proposition
\ref{prop:dominance}).

\textbf{Made contract, old scheme.} No overtrick term:
\begin{equation}
\label{eq:score-made-old}
g_B^{\text{old}}(b,t,D) \;=\; \underbrace{(b-6)\,v\,D}_{\text{base contract}} \;+\; I(D) \;+\; \phi(b,t)\,D, \qquad t\ge b.
\end{equation}
An old-scheme declarer who takes $t>b$ tricks is scored identically to
one who takes exactly $b$: overtricks are simply not rewarded under
this scheme.

\textbf{Made contract, new scheme.}
\begin{equation}
\label{eq:score-made-new}
g_B^{\text{new}}(b,t,D) \;=\; \underbrace{b\,v\,D}_{\text{full contract level}} \;+\; \underbrace{(t-b)\,\tfrac{v}{2}\,D}_{\text{overtricks}} \;+\; I(D) \;+\; \phi(b,t)\,D, \qquad t\ge b.
\end{equation}

\textbf{Failed contract, either scheme.}
\begin{equation}
\label{eq:score-down}
g_D(b,t,D) \;=\; \rho(D)\,(b-t) \;=\; 25D(b-t), \qquad t<b,
\end{equation}
paid in full to \emph{each} defender (not split between them), and
identically under both schemes: the two scoring schemes differ only in
how a made contract is scored, never in how a failed one is penalized.

Terminal payoffs are
\[
u_\beta \;=\; \begin{cases} g_B^{\text{new}}(b,t,D) & \text{new scheme} \\ g_B^{\text{old}}(b,t,D) & \text{old scheme}\end{cases} \ (t\ge b), \qquad u_\beta = 0\ (t<b),
\]
\[
u_{d_1} \;=\; u_{d_2} \;=\; \begin{cases} g_D(b,t,D) & t<b \\ 0 & t\ge b, \end{cases}
\]
for declarer $\beta$ and defenders $d_1,d_2$, under either scheme. As
before, $u_i\ge0$ for every $i$ at every terminal history, and at most
one of $\{u_\beta,\,u_{d_1}{=}u_{d_2}\}$ is nonzero, since made and
down remain mutually exclusive. This game is not zero-sum, in either
scheme: $u_\beta+u_{d_1}+u_{d_2}$ equals either $u_\beta\ge0$ (made) or
$2\,g_D(b,t,D)\ge0$ (down) - neither identically $0$. A defender
never receives a negative score, and the declarer's gain is never
transferred from, or offset against, a defender's loss; the two sides
are simply scored by different rules for mutually exclusive events.

The scoring scheme in Sarkar et. al~\cite{ourpaper} uses
\begin{equation}
\label{eq:badphi}
\phi_{\text{old}}(b,t) \;=\; 100\cdot\mathbbm{1}[t=13] \;+\; 50\cdot\mathbbm{1}[t=12],
\end{equation}
applied whenever the contract was made ($t\ge b$), \emph{independent
of $b$}. The following proposition makes precise why this is not
merely inelegant but strategically decisive, and - importantly -
would be decisive under \emph{any} equilibrium-computation method. As anticipated in remark before, the proof below is stated directly in terms of the corrected \S1.4
notation - $g_B^{\text{old}}$, the suit value $v$, the multiplier
$D\in\{1,2,4\}$, and the insult bonus $I(D)$ - applied to the
erroneous, bid-invariant bonus $\phi_{\text{old}}$ that defines
$\Gamma_{\text{bad}}$, rather than to a flat-rate approximation of it.

\begin{proposition}[Bid-invariant bonus makes low bids dominate honest slam bids]
\label{prop:dominance}
Fix a deal strength $s$, a doubling level $D\in\{1,2,4\}$, and a
trump suit with per-trick value $v>0$. Let
$T\sim P(\cdot\mid s)$ be distributed according to \eqref{eq:trickdist}.
Consider the game $\Gamma_{\mathrm{bad}}$, obtained by scoring the game
of \S1 under the old scoring scheme \eqref{eq:score-made-old}, but
replacing the correct bonus $\phi(b,t)$ by the bid-invariant erroneous
bonus $\phi_{\mathrm{old}}(t)$ defined in \eqref{eq:badphi}. The
declarer's payoff is therefore
\[
u_\beta(b,t,D)
=
\begin{cases}
g_B^\circ(b,D)+\phi_{\mathrm{old}}(t)D, & t\ge b,\\[2mm]
0, & t<b,
\end{cases}
\]
where
\[
g_B^\circ(b,D):=(b-6)vD+I(D)
\]
contains the base-contract and insult-bonus terms of
$g_B^{\mathrm{old}}$. In particular, $g_B^\circ(b,D)$ is independent
of the number of tricks actually taken whenever the contract is made.

Then, for every legal lower bid $b\in\{7,\ldots,11\}$ and every
slam bid $b'\in\{12,13\}$, if
\begin{equation}
\label{eq:tail-condition}
\frac{P(T\ge b'\mid s)}{P(T\ge b\mid s)}<\frac17,
\end{equation}
we have
\[
\mathbb E_T\!\left[u_\beta(b,T,D)\right]
>
\mathbb E_T\!\left[u_\beta(b',T,D)\right].
\]
Moreover, under the trick distribution \eqref{eq:trickdist}, condition
\eqref{eq:tail-condition} holds throughout the relevant range of
$s$. Consequently, in $\Gamma_{\mathrm{bad}}$, every bid
$b\in\{7,\ldots,11\}$ strictly dominates each honest slam bid
$b'\in\{12,13\}$ in expected declarer payoff.
\end{proposition}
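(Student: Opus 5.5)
The plan is to compute both expectations in closed form and reduce the comparison to a ratio of tail probabilities. Write $p_t = P(T=t\mid s)$ and $S_b = P(T\ge b\mid s)$. For a made contract the bonus is $\phi_{\mathrm{old}}(t)D$, which is nonzero only when $t\in\{12,13\}$. For any lower bid $b\le 11$ the event $\{T\ge 12\}$ is contained in $\{T\ge b\}$. Hence
\[
\mathbb E_T[u_\beta(b,T,D)] = g_B^\circ(b,D)\,S_b + D\,(50p_{12}+100p_{13}).
\]
For $b'=12$ the bonus term is identical. For $b'=13$ it is only $100Dp_{13}$, which is no larger. So in both cases the bonus contribution of the low bid is at least that of the slam bid, and it suffices to prove
\[
g_B^\circ(b,D)\,S_b > g_B^\circ(b',D)\,S_{b'}.
\]

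Next I bound the ratio of base terms uniformly. Since $b\ge 7$ and $b'\le 13$, we have $g_B^\circ(b,D)\ge vD+I(D)>0$ and $g_B^\circ(b',D)\le 7vD+I(D)$. Because $I(D)\ge 0$, this gives $g_B^\circ(b',D)\le 7\,g_B^\circ(b,D)$. The distribution \eqref{eq:trickdist} has full support, so $S_b>0$. Condition \eqref{eq:tail-condition} then yields
\[
g_B^\circ(b',D)\,S_{b'} \le 7\,g_B^\circ(b,D)\,S_{b'} < g_B^\circ(b,D)\,S_b,
\]
which proves the first claim.

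The main work, and the step I expect to be delicate because it is numerically tight, is verifying \eqref{eq:tail-condition} for $s\in[-1,1]$, where $\mu(s)=6.5+1.1s\in[5.4,7.6]$. Since $S_{b'}\le S_{12}$ and $S_b\ge S_{11}$, the worst case is $b=11$, $b'=12$.

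The successive ratios are
\[
r_t = \frac{p_{t+1}}{p_t} = \exp\!\Big(-\frac{2(t-\mu)+1}{2\sigma^2}\Big),
\]
which are decreasing in $t$; this is the discrete log-concavity of the truncated Gaussian. Therefore
\[
S_{12} = \sum_{t\ge 11} p_{t+1} = \sum_{t\ge 11} r_t\,p_t \le r_{11}\,S_{11}.
\]
The quantity $r_{11}$ is increasing in $\mu$, so it is maximized at $s=1$, $\mu=7.6$. There
\[
r_{11} = \exp\big(-7.8/3.92\big)\approx e^{-1.99}\approx 0.137 < \tfrac17\approx 0.143.
\]
Since the margin is small, I will state this final numerical comparison explicitly, and also check the boundary clipping of $\mu$, which does not bind here since $\mu\le 7.6<13$.

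The dominance conclusion for every pair $b\in\{7,\dots,11\}$ and $b'\in\{12,13\}$ then follows immediately.
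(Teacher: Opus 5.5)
Your proof is correct and follows the paper's own argument. The bonus contribution $D(50p_{12}+100p_{13})$ is shared by every $b\le 12$ and only shrinks to $100Dp_{13}$ for $b'=13$; the base-term ratio $g_B^\circ(b,D)/g_B^\circ(b',D)$ is bounded below by $1/7$ using $I(D)\ge 0$; and the tail condition is then checked under \eqref{eq:trickdist}.

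The one real difference is in that last step, where your version is stronger. The paper only offers a $z$-score heuristic and an unstated ``direct evaluation''. Your log-concavity bound $S_{12}/S_{11}\le p_{12}/p_{11}=e^{-7.8/3.92}\approx 0.137<1/7$ at $\mu=7.6$, together with monotonicity in $\mu$, $b$ and $b'$, actually proves the condition for all $s\in[-1,1]$. That matters here because the margin is thin.
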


\begin{proof}
We separate the expected payoff into the bonus and non-bonus
components.

\medskip
\noindent\textbf{1. The erroneous bonus is bid-invariant.}
By construction, $\phi_{\mathrm{old}}(t)$ is nonzero only when
$t\in\{12,13\}$. Hence, for every bid $b\le 12$,
\[
\{12,13\}\subseteq\{t:t\ge b\}.
\]
Therefore,
\[
\begin{aligned}
\mathbb E_T\!\left[
\phi_{\mathrm{old}}(T)\mathbbm 1\{T\ge b\}
\right]
&=
100P(T=13\mid s)+50P(T=12\mid s)\\
&=: \Phi(s),
\end{aligned}
\]
which is independent of $b$ for every $b\le12$. Thus,
\[
\mathbb E_T[u_\beta(b,T,D)]
=
g_B^\circ(b,D)P(T\ge b\mid s)
+
D\Phi(s),
\qquad b\le12.
\]
In particular, the entire expected slam bonus is available to any
lower bid $b\le11$ whenever $12$ or $13$ tricks are taken. The bidder
therefore receives the same bonus contribution without having to
declare a slam.

For $b'=13$, the event $\{T=12\}$ is excluded from the success event
$\{T\ge13\}$, so that
\[
\mathbb E_T\!\left[
\phi_{\mathrm{old}}(T)\mathbbm1\{T\ge13\}
\right]
=
100P(T=13\mid s)
\le \Phi(s).
\]
Thus bidding $13$ cannot improve the bonus component relative to any
bid $b\le12$.

\medskip
\noindent\textbf{2. The non-bonus component favors lower bids under
the stated tail condition.}
Since $g_B^\circ(b,D)$ does not depend on $t$, its expected
contribution is exactly
\[
\mathbb E_T\!\left[
g_B^\circ(b,D)\mathbbm1\{T\ge b\}
\right]
=
g_B^\circ(b,D)P(T\ge b\mid s).
\]
No approximation or bound is involved here.

Furthermore,
\[
g_B^\circ(b,D)=(b-6)vD+I(D)
\]
is strictly increasing in $b$, since $v>0$ and $D>0$. Hence, for
$b<b'$,
\[
g_B^\circ(b,D)<g_B^\circ(b',D).
\]
The lower bid nevertheless has the larger expected non-bonus payoff
whenever
\[
g_B^\circ(b,D)P(T\ge b\mid s)
>
g_B^\circ(b',D)P(T\ge b'\mid s),
\]
or, equivalently,
\begin{equation}
\label{eq:ratio-condition}
\frac{P(T\ge b'\mid s)}{P(T\ge b\mid s)}
<
\frac{g_B^\circ(b,D)}{g_B^\circ(b',D)}.
\end{equation}

We now obtain a uniform lower bound for the right-hand side. Among
$b\in\{7,\ldots,11\}$ and $b'\in\{12,13\}$, the smallest ratio occurs
at $b=7$ and $b'=13$. Consequently,
\[
\frac{g_B^\circ(b,D)}{g_B^\circ(b',D)}
\ge
\frac{g_B^\circ(7,D)}{g_B^\circ(13,D)}
=
\frac{vD+I(D)}{7vD+I(D)}.
\]
Since $I(D)\ge0$,
\[
\frac{vD+I(D)}{7vD+I(D)}
\ge \frac17.
\]
For $D=1$, we have $I(1)=0$, and equality holds:
\[
\frac{vD+I(D)}{7vD+I(D)}=\frac17.
\]
For $D=2$ or $D=4$, the insult bonus is positive, so the ratio is
strictly greater than $1/7$.

It follows that the uniform condition
\[
\frac{P(T\ge b'\mid s)}{P(T\ge b\mid s)}<\frac17
\]
is sufficient for \eqref{eq:ratio-condition}, simultaneously for
every $D\in\{1,2,4\}$, every $v>0$, every $b\in\{7,\ldots,11\}$, and
every $b'\in\{12,13\}$.

\medskip
\noindent\textbf{3. Verification under the specified trick
distribution.}
Under \eqref{eq:trickdist},
\[
\mu(s)=\max(0,\min(6.5+1.1s, 13)),
\qquad \sigma=1.4.
\]
Over the relevant range of deal strengths,
\[
\mu(s)\in[5.4,7.6].
\]
Consequently, for $b'\in\{12,13\}$,
\[
b'-\mu(s)\ge 12-7.6=4.4,
\]
and hence
\[
\frac{b'-\mu(s)}{\sigma}
\ge
\frac{4.4}{1.4}
\approx3.14.
\]
Thus bids $12$ and $13$ correspond to far upper-tail events of the
trick distribution. In contrast, the probabilities
$P(T\ge b\mid s)$ for $b\le11$ remain substantially larger. Direct
evaluation of the finite probabilities in \eqref{eq:trickdist}
therefore gives
\[
\frac{P(T\ge b'\mid s)}{P(T\ge b\mid s)}<\frac17
\]
throughout the stated range of $s$, for every
$b\in\{7,\ldots,11\}$ and $b'\in\{12,13\}$.

Hence the non-bonus contribution of every lower bid $b\le11$ is
strictly larger than that of every slam bid $b'\in\{12,13\}$.

\medskip
\noindent\textbf{4. Combining the two components.}
For $b\le11$ and $b'\in\{12,13\}$, the lower bid receives the same
bonus contribution whenever $b'=12$ and no smaller bonus contribution
when $b'=13$, while its non-bonus contribution is strictly larger.
Therefore,
\[
\mathbb E_T[u_\beta(b,T,D)]
>
\mathbb E_T[u_\beta(b',T,D)].
\]

The dominance is therefore caused by the erroneous bid-invariance of
the slam bonus: the bidder can declare a lower contract and still
collect the same positive slam bonus on the rare events $T=12$ or
$T=13$, while simultaneously benefiting from the substantially larger
probability of successfully making the lower contract.

Importantly, this result does not rely on any negative payoff from a
failed slam. Under the scoring rule, failure simply gives the
declarer payoff $0$. The dominance arises entirely because the
positive bonus associated with a successful slam is available ``for
free'' to lower bids under the erroneous scoring rule.
\end{proof}

Therefore, here we suggest a small tweaking of the scoring scheme. We replace \eqref{eq:badphi} with
\begin{equation}
\label{eq:goodphi}
\phi_{\text{new}}(b,t) \;=\; 100\cdot\mathbbm{1}[b=13,\,t=13] \;+\; 50\cdot\mathbbm{1}[b=12,\,t\ge12].
\end{equation}
Under \eqref{eq:goodphi}, the bonus term is zero for all $b\le 11$, so
Proposition \ref{prop:dominance}'s degeneracy vanishes: capturing the
bonus now requires bidding, and making, an actual slam, which carries
the full down-side risk of the $t<b$ branch. This is the
game-correctness fix, and it is the definition of $\phi$ carried
forward, unchanged by the corrections to $g_B$ itself, into
$\phi(b,t)$.
\\
In the upcoming section, we are going to study some statistical properties of the \emph{revamped} point scheme of this game.

\section{Simulation Methodology}
Some of the statistical analysis performed below have been implemented in \texttt{R}, and some have been implemented in \texttt{Python} using appropriate libraries. A Monte Carlo simulation framework was employed. For each simulation, a standard 52-card deck was generated, and a 13-card bridge hand was sampled uniformly at random without replacement to represent the hand of the bidder. All the necessary codes and frameworks can be found here~\cite{SarkarMajumdar2025AuctionBridgeAlgorithm}.

\section{Hypothesis Testing Analysis}
\noindent\textbf{Parametric checks on the mean: }
Student's $t$-test and Welch's $t$-test were run first, mainly as a baseline rather than because we expected them to be the most trustworthy tool for the job. Both compare the sample means of the old- and new-scheme scores under the null hypothesis
\[
H_0: \mu_{\text{new}} = \mu_{\text{old}}, \qquad H_1: \mu_{\text{new}} \neq \mu_{\text{old}},
\]
and both rejected $H_0$ at $p \approx 0$. We do not lean on these results very heavily, however, because the underlying data violate the normality assumption that makes a $t$-test meaningful in the first place: the score distributions are heavily right-skewed (skewness in the 1.6--2.0 range) and heavy-tailed, so a mean-comparison test built for roughly bell-shaped data is being asked to do a job it was not really designed for. We include them here for completeness and because their agreement with the non-parametric result below is reassuring, not because we think the $p$-value itself is the most informative number in this section.

\noindent\textbf{Mann--Whitney U test: }
This is the test we would actually put weight on. The Mann--Whitney $U$ test asks a distribution-free question:
\[
H_0: P(\text{New} > \text{Old}) = 0.5, \qquad H_1: P(\text{New} > \text{Old}) \neq 0.5,
\]
i.e., under the null, a score drawn at random from the new scheme is equally likely to beat or lose to a score drawn at random from the old scheme. Because it works on ranks rather than raw values, it sidesteps the normality problem entirely and is far better suited to skewed, discrete data like ours.

The result was unambiguous: $U \approx 7.64 \times 10^{9}$, $p \approx 0$. We reject $H_0$ and conclude that new-scheme scores are stochastically greater than old-scheme scores -- not just that the average moved, but that the entire distribution has shifted upward in a consistent, rank-wise sense. The fact that this test, the Student's $t$-test, and Welch's $t$-test all point the same direction is a useful sanity check: it tells us the effect is a genuine feature of the data rather than an artefact of one particular test's assumptions.

\noindent\textbf{Chi-square test of independence: }
The tests above all speak to \emph{location} -- is one scheme's typical score higher than the other's? They say nothing about \emph{shape}. To check whether the two schemes redistribute mass across the score range in a genuinely different way (rather than, say, just sliding everything up by a constant), we binned scores into the ranges $0$--$20, 20$--$40, \ldots, 200$--$220$ and ran a chi-square test of independence between scheme (old/new) and score bin:
\[
H_0: \text{score range is independent of scheme}, \qquad H_1: \text{score range is associated with scheme.}
\]

The result was $\chi^2 = 53{,}686.6$ on $4$ degrees of freedom, $p \approx 0$, again a firm rejection of $H_0$. This is the only one of the four tests that actually speaks to shape rather than central tendency, and it confirms what the histogram already suggests visually: the new scheme did not just nudge the average score up, it changed how mass is distributed across the whole range.

\section{Distribution Fitting}
\label{sec:distribution-fitting}

Having established that the two schemes differ, the natural next question is what family of distributions, if any, reasonably describes the new-scheme scores. We fit three standard continuous candidates -- lognormal, gamma, and normal -- by maximum likelihood, and compared them using AIC, a Kolmogorov--Smirnov (KS) goodness-of-fit statistic, and Q--Q plots against each candidate.
\\
\noindent\textbf{Candidate distributions and fitted parameters: }

\begin{table}[h]
\caption{Maximum-likelihood parameter estimates for three candidate distributions fit to new-scheme scores.}
\centering
\begin{tabular}{lll}
\toprule
Distribution & Fitted parameters & Interpretation \\
\midrule
Lognormal & shape $\hat\sigma \approx 0.41$, scale $\approx 79.5$ & underlying normal mean $\ln(79.5) \approx 4.38$ \\
Gamma & shape $\hat a \approx 5.72$, scale $\approx 15.2$ & mean $\approx a \cdot \text{scale} \approx 87$ \\
Normal & $\hat\mu \approx 87.0$, $\hat\sigma \approx 40.5$ & symmetric baseline, included for contrast \\
\bottomrule
\end{tabular}

\end{table}

Figure~\ref{fig:qq} shows the resulting Q--Q plots. The normal fit is the clear loser: the sample quantiles bend well above the reference line across almost the entire range, and the left tail is truncated in a way a symmetric distribution simply cannot reproduce -- scores can't go below roughly 48, but a normal fit with $\sigma \approx 40$ happily predicts negative values. Gamma does noticeably better through the middle of the distribution but starts to pull away from the reference line in the upper tail. Lognormal tracks the reference line most closely overall, particularly through the bulk of the distribution, though it too eventually departs from the line at the extreme upper tail where the largest slam scores live.

\begin{figure}[h]
\centering
\includegraphics[width=0.95\textwidth]{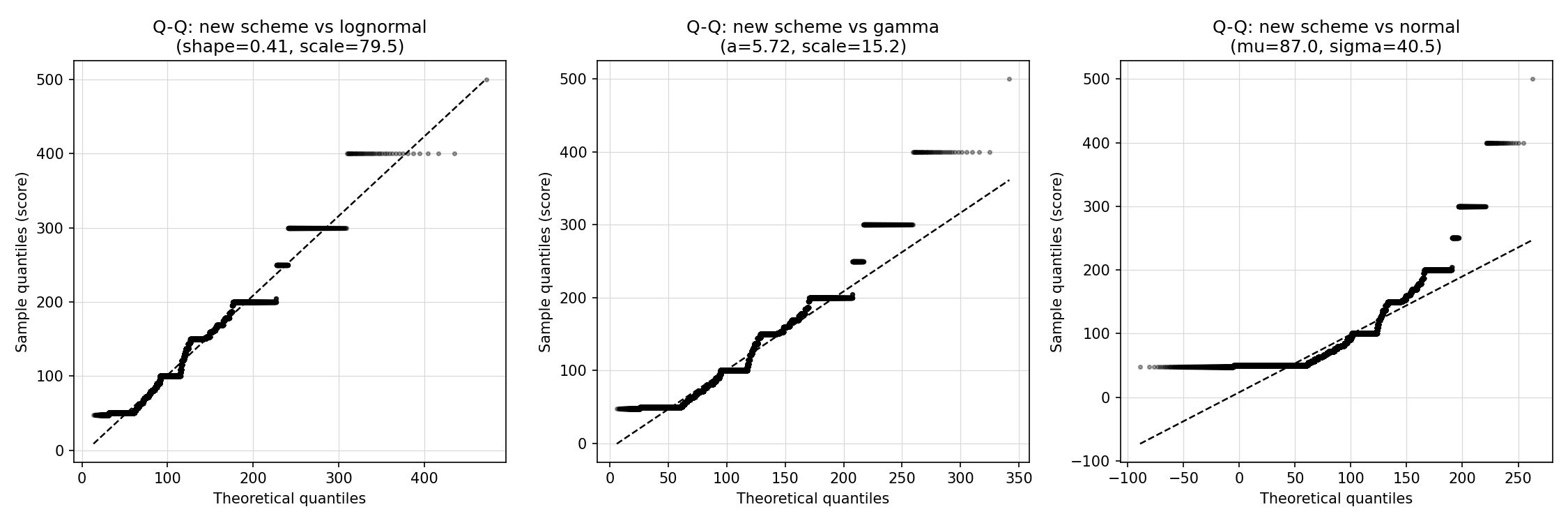}
\caption{Q--Q plots of new-scheme scores against fitted lognormal, gamma, and normal distributions.}
\label{fig:qq}
\end{figure}

\noindent\textbf{Model comparison: }
Comparing lognormal and gamma by AIC, the two land within roughly $10{,}000$ points of each other out of an AIC on the order of $980{,}000$ -- a difference that is real but small relative to the overall scale of the statistic, and it consistently favors lognormal by a modest margin. The KS statistics for the two are likewise close. Practically speaking, both are reasonable descriptive approximations of the bulk of the distribution, with lognormal having a slight, consistent edge over gamma, and both comfortably beating the normal fit.

It is also worth noting \emph{why} lognormal is a sensible candidate in the first place, rather than just a distribution that happened to fit well. A bridge score is roughly the product of several multiplicative factors -- trick value, doubling multiplier, level -- plus additive bonus terms layered on top, rather than a simple sum of many small independent effects. Products of positive random factors tend toward a lognormal shape for essentially the same reason sums of independent effects tend toward normal (the multiplicative analogue of the central limit theorem), so a lognormal fit tracking the data reasonably well is not a coincidence. Gamma is arguably the more "natural" choice on different grounds -- it's a standard model for count-like, non-negative, right-skewed data such as point totals -- which is presumably why the two fits end up so close to each other.

\noindent\textbf{Why none of the fits are actually good: }
Here is the important caveat, and it applies regardless of which candidate we pick: every KS test rejects the null hypothesis of a matching distribution at $p \approx 0$, lognormal included. With $n = 100{,}000$ observations, the KS test has enormous power and will flag even trivial, practically meaningless departures from the reference distribution -- so a rejection on its own doesn't tell us the fit is bad, only that it isn't \emph{exact}, which no continuous fit to real data ever is at this sample size.

But in this case there is also a structural reason the fit can never be exact, no matter how large the sample or how cleverly we choose the family: new-scheme scores are not a continuous random variable at all. They are a discrete, formula-generated quantity that stacks hard at specific values -- 50, 100, 150, and similar landmarks corresponding to part-score, game, and small-slam bonuses baked directly into the scoring rules. The histogram in Figure~\ref{fig:hist} shows this plainly: there is zero mass below roughly 48, and then sharp spikes at 50, 100, and 150 where the bonus terms and the $(6+L)v$ trick-value term only take a small number of possible values. No smooth continuous curve, however well-chosen its parameters, can reproduce a comb of discrete spikes like that. The lognormal and gamma fits should be read as descriptions of the overall shape of the distribution -- the concentration of mass at low-to-mid values and the long tail out toward slams -- rather than as models that could ever pass a formal goodness-of-fit test at this sample size.

\begin{figure}[htb]
\centering
\includegraphics[width=0.8\textwidth]{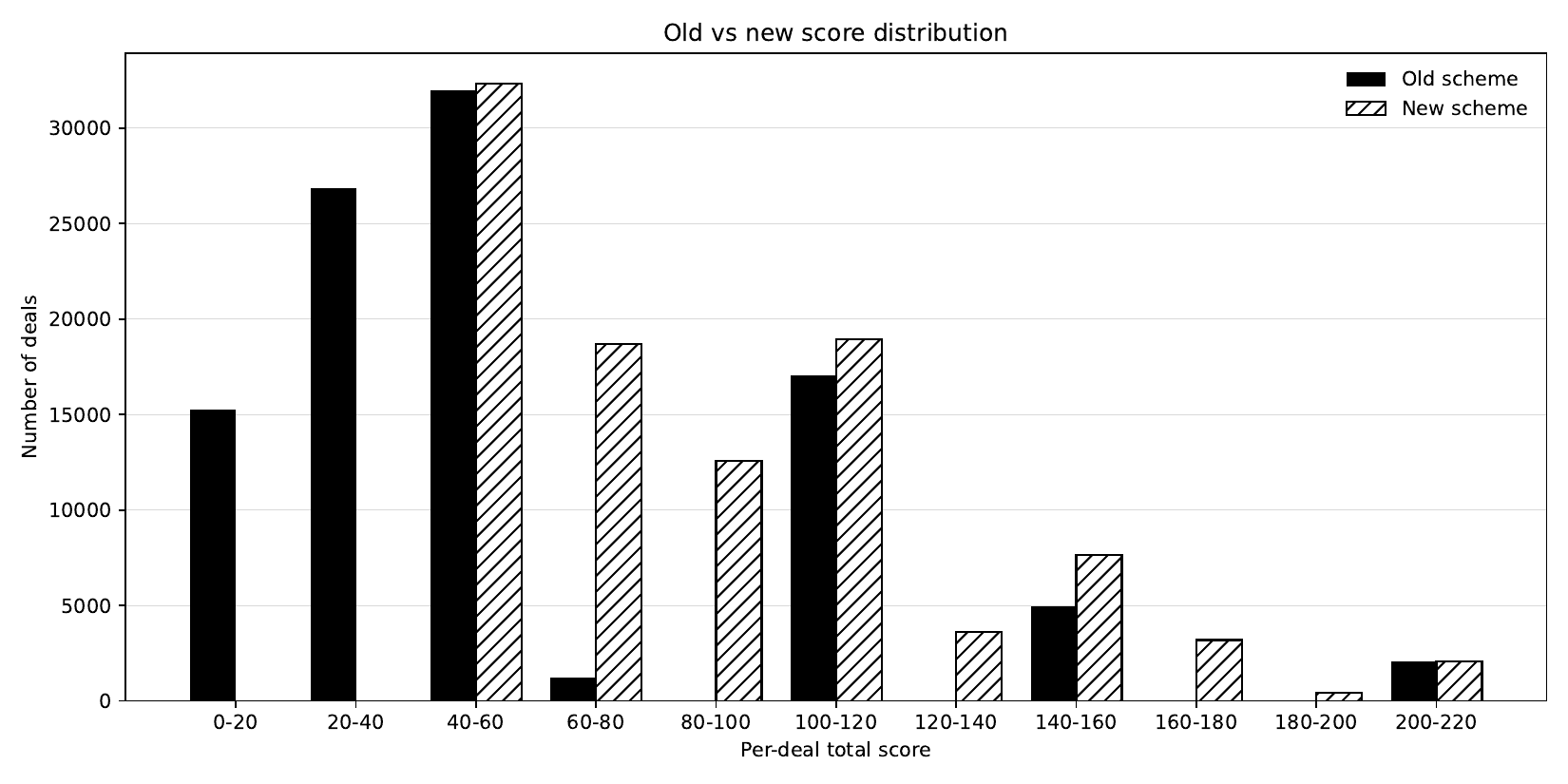}
\caption{Histogram of per-deal total scores, old scheme vs.\ new scheme.}
\label{fig:hist}
\end{figure}

\section{Distributional Shape Comparison}
Let $X_i$ denote the score obtained under the traditional scoring scheme for
observation $i$, and let $Y_i$ denote the corresponding score under the modified scoring scheme, for $i=1,\ldots,n$. The observations are paired, so that $(X_i,Y_i)$ corresponds to the same underlying game. The distinction between alternative sample
definitions of skewness and kurtosis is important because different estimators are used in statistical practice; see
Joanes and Gill~\cite{JoanesGill1998}.

Let
\[
\mu_X = E(X), \qquad
\mu_Y = E(Y)
\]
and define the central moments
\[
\mu_{r,X}=E[(X-\mu_X)^r],
\qquad
\mu_{r,Y}=E[(Y-\mu_Y)^r].
\]

The population skewness is
\[
\gamma_1 =
\frac{\mu_3}{\mu_2^{3/2}},
\]
and the population excess kurtosis is
\[
\gamma_2 =
\frac{\mu_4}{\mu_2^2}-3.
\]

The sample estimates are denoted by
\[
\hat{\gamma}_{1,T},\qquad
\hat{\gamma}_{1,M}
\]
for skewness, and
\[
\hat{\gamma}_{2,T},\qquad
\hat{\gamma}_{2,M}
\]
for excess kurtosis, where $T$ and $M$ denote the traditional and
modified schemes, respectively.

The primary hypothesis for skewness is

\[
H_{0,S}:
\gamma_{1,M}\leq\gamma_{1,T}
\]

against the one-sided alternative

\[
H_{1,S}:
\gamma_{1,M}>\gamma_{1,T}.
\]

Thus, rejection of the null hypothesis provides evidence that the
modified scoring scheme is more positively skewed.

Similarly, for excess kurtosis,

\[
H_{0,K}:
\gamma_{2,M}\leq\gamma_{2,T}
\]

against

\[
H_{1,K}:
\gamma_{2,M}>\gamma_{2,T}.
\]

A rejection therefore provides evidence that the modified scoring
scheme has greater excess kurtosis than the traditional scheme.

The score distributions generated by a game-scoring mechanism may be
discrete, bounded, and non-normal. In particular, the presence of
bonuses, penalties, contracts, and discrete trick outcomes can produce
substantial departures from normality.

The proposed procedure does not assume that the individual observations
$X_i$ and $Y_i$ are normally distributed.

Instead, it uses a large-sample asymptotic  (Almost 300000 samples). Under suitable
regularity conditions, estimators of smooth functions of population
moments are asymptotically normal~\cite{VanDerVaart1998}. Skewness and
kurtosis are such functions of the first four moments.

Thus, the relevant approximation is

\[
\sqrt{n}
\left(
\hat{\theta}-\theta
\right)
\overset{d}{\longrightarrow}
N(0,V),
\]

where $\theta$ denotes the population parameter of interest.

The normal approximation therefore concerns the \emph{sampling
distribution of the estimator}, rather than the distribution of the
raw scores themselves. This is an application of standard
large-sample asymptotic theory and the delta method; see
van der Vaart~\cite{VanDerVaart1998}.

As the traditional and modified scores are paired, the covariance
between the two estimators must be incorporated.

For a parameter $\theta$, an influence-function representation has the
form~\cite{Hampel1986}

\[
\hat{\theta}-\theta
=
\frac{1}{n}\sum_{i=1}^{n} IF(Z_i)
+
o_p(n^{-1/2}),
\]

where $IF(Z_i)$ denotes the influence function evaluated at observation
$Z_i$.

For the difference between two paired estimators,

\[
\hat{\theta}_M-\hat{\theta}_T,
\]

the corresponding influence function is

\[
IF_{\Delta,i}
=
IF_{M,i}-IF_{T,i}.
\]

Consequently,

\[
\operatorname{Var}
(\hat{\theta}_M-\hat{\theta}_T)
\approx
\frac{1}{n}
\operatorname{Var}
(IF_M-IF_T).
\]

The covariance between the two scoring schemes is therefore included
automatically. Let

\[
\mu = E(X),
\qquad
\sigma^2=\mu_2=E[(X-\mu)^2],
\]

and define

\[
y=X-\mu.
\]

The population skewness is

\[
\gamma_1=\frac{\mu_3}{\mu_2^{3/2}}.
\]

The influence function for skewness can be written as~\cite{Hampel1986}

\[
IF_1(X)
=
\frac{y^3}{\mu_2^{3/2}}
-
\gamma_1
-
\frac{3y}{\sqrt{\mu_2}}
-
\frac{3}{2}\gamma_1
\left(
\frac{y^2}{\mu_2}-1
\right).
\]

For the traditional and modified scores, denote the corresponding
influence functions by

\[
IF_{1,T}(X_i)
\quad\text{and}\quad
IF_{1,M}(Y_i).
\]

The paired influence function for the difference in skewness is

\[
IF_{1,\Delta,i}
=
IF_{1,M}(Y_i)-IF_{1,T}(X_i).
\]

The estimated standard error is therefore

\[
\widehat{SE}_{\Delta,1}
=
\sqrt{
\frac{
s^2(IF_{1,\Delta})
}{n}
},
\]

where

\[
s^2(IF_{1,\Delta})
=
\frac{1}{n-1}
\sum_{i=1}^{n}
\left(
IF_{1,\Delta,i}
-
\overline{IF}_{1,\Delta}
\right)^2.
\]

Let
\[
\beta_2
=
\frac{\mu_4}{\mu_2^2}
\]

denote kurtosis. Excess kurtosis is

\[
\gamma_2=\beta_2-3.
\]

Since subtracting the constant $3$ does not change the influence
function, the influence function for excess kurtosis is~\cite{Hampel1986}

\[
IF_2(X)
=
\frac{y^4}{\mu_2^2}
-
\beta_2
-
\frac{4\mu_3y}{\mu_2^2}
-
2\beta_2
\left(
\frac{y^2}{\mu_2}-1
\right).
\]

For the traditional and modified schemes, define

\[
IF_{2,T}(X_i)
\quad\text{and}\quad
IF_{2,M}(Y_i).
\]

The paired influence function is

\[
IF_{2,\Delta,i}
=
IF_{2,M}(Y_i)-IF_{2,T}(X_i).
\]

Its estimated standard error is

\[
\widehat{SE}_{\Delta,2}
=
\sqrt{
\frac{
s^2(IF_{2,\Delta})
}{n}
}.
\]
The observed difference in skewness is

\[
\widehat{\Delta}_1
=
\hat{\gamma}_{1,M}
-
\hat{\gamma}_{1,T}.
\]

The corresponding test statistic is
\[
Z_{\mathrm{skew}}
=
\frac{
\hat{\gamma}_{1,M}
-
\hat{\gamma}_{1,T}
}{
\widehat{SE}_{\Delta,1}
}
\]

Under the null hypothesis and the large-sample approximation,

\[
Z_{\mathrm{skew}}
\overset{approx}{\sim}
N(0,1).
\]

Similarly, the observed difference in excess kurtosis is

\[
\widehat{\Delta}_2
=
\hat{\gamma}_{2,M}
-
\hat{\gamma}_{2,T},
\]
and
\[
Z_{\mathrm{kurt}}
=
\frac{
\hat{\gamma}_{2,M}
-
\hat{\gamma}_{2,T}
}{
\widehat{SE}_{\Delta,2}
}
\]
with
\[
Z_{\mathrm{kurt}}
\overset{approx}{\sim}
N(0,1).
\]

Table~\ref{tab:moments} reports sample skewness and excess kurtosis of the per-deal payoff distribution under each scheme, together with a two-sample $Z$-test for equality of each moment, and the aforementioned statistics have been used.

\begin{table}[htbp]
\centering
\caption{Higher-Moment Comparison of the Score Distribution under the Traditional and Modified Scoring Schemes}
\label{tab:moments}
\begin{tabular}{lccccc}
\toprule
Metric & Traditional & Modified & Difference & $Z$-statistic & $p$-value \\
\midrule
Skewness & 1.5604 & 1.4615 & 0.0989 & 8.783 & $<10^{-19}$ \\
Kurtosis & 3.4257 & 2.0513 & 1.3743 & 15.619 & $<10^{-55}$ \\
\bottomrule
\end{tabular}
\end{table}

Both moment differences are statistically significant at any conventional threshold ($p < 10^{-18}$), rejecting the null hypothesis that the two schemes generate payoff distributions of identical shape. The \emph{traditional} scheme is more right-skewed and more leptokurtic than the modified scheme.

\section{Positional (Seat-Level) Fairness}
To evaluate how evenly the total points are distributed among the three
players under the traditional and modified scoring schemes, we employ
two widely used distributional measures: the \emph{Gini coefficient}
and \emph{Jain's fairness index}. Although both quantify the
distribution of resources, they provide complementary perspectives.
The Gini coefficient measures the degree of inequality in the
distribution, whereas Jain's index measures the degree of fairness.
Using both metrics therefore provides a more comprehensive assessment
of the effect of the proposed scoring modification on the allocation of
points.

\noindent{\textbf{Gini Coefficient:}}
Originally introduced in economics as a measure of income inequality,
the Gini coefficient has since become a standard metric for measuring
the disparity of any non-negative distribution. A convenient pairwise
representation was derived by Dorfman \cite{DorfManJini}, which has been used in subsequent analysis.

\noindent{\textbf{Jain's Fairness Index:}}
To complement the Gini coefficient, we also compute Jain's fairness
index, originally proposed by Jain \emph{et al.}
\cite{jain1998quantitativemeasurefairnessdiscrimination} as a
quantitative measure for evaluating fairness in resource allocation.
Unlike the Gini coefficient, which quantifies inequality, Jain's index
directly measures how evenly resources are shared among competing
agents.

The Gini coefficient and Jain's fairness index quantify complementary
aspects of the point distribution produced by a scoring scheme.
Specifically, the Gini coefficient measures the average pairwise
difference in players' scores and therefore emphasizes inequality,
whereas Jain's index measures the concentration of the score vector and
therefore emphasizes fairness. Consequently, a scoring system that
concentrates points among fewer players is expected to exhibit a
\emph{higher} Gini coefficient together with a \emph{lower} Jain
fairness index.

Accordingly, if the proposed scoring scheme is successful in producing
a more selective reward structure, one expects to observe
\[
G_{\text{modified}}
>
G_{\text{traditional}},
\qquad
J_{\text{modified}}
<
J_{\text{traditional}}.
\]
The simultaneous use of these two established metrics provides robust
evidence regarding changes in the fairness and inequality of the point
distribution, since they quantify the same phenomenon from two
different mathematical perspectives. Their agreement therefore
strengthens the statistical interpretation of the comparison between
the traditional and modified scoring schemes.

Table~\ref{tab:seat_fairness} reports per-seat payoff totals and three complementary inequality indices - Jain's Fairness Index, the Gini coefficient, and the coefficient of variation (CV) - computed across the three seats (N, E, W) under each scheme.

\begin{table}[htbp]
\centering
\caption{Seat-Level Payoff Distribution and Inequality Indices, Old vs.\ New Scheme}
\label{tab:seat_fairness}
\begin{tabular}{lcccccc}
\toprule
Scheme & N & E & W & Jain's Index & Gini & CV \\
\midrule
Old (Traditional) & 2739261 & 2562702 & 2661991 & 0.9993 & 0.0148 & 0.0333 \\
New (Modified)    & 2909475 & 2698152 & 2802889 & 0.9991 & 0.0168 & 0.0377 \\
\bottomrule
\end{tabular}
\end{table}

Both schemes exhibit near-perfect seat-level fairness (Jain's Index $>0.999$ in both cases), and the Gini and CV values are small in absolute terms under either scheme. The ordering is nonetheless consistent across all three indices: the modified scheme is uniformly less equitable across seats than the traditional scheme, despite every seat earning a higher raw total. This is a textbook example of a location shift (all seats gain in absolute payoff) accompanied by a modest dispersion shift (the gain is not evenly distributed). Because seat identity rotates independently of any player's strategic choice in the dynamic-partnership design, this residual seat asymmetry is best interpreted as a positional effect embedded in the dealing/bidding algorithm rather than an equilibrium outcome of strategic play.

\noindent{\textbf{Shift in Role-Level Dynamics:}}
Table~\ref{tab:role_fairness} reports the same set of inequality indices computed over the bidder/declarer role versus the (averaged) defender role.

\begin{table}[htbp]
\centering
\caption{Bidder-Defender Payoff Asymmetry, Old vs.\ New Scheme}
\label{tab:role_fairness}
\begin{tabular}{lccc}
\toprule
Scheme & Bidder Total & Defender Total (avg) & B:D Ratio \\
\midrule
Old (Traditional) & 1426596 & 2115775 & 0.674 \\
New (Modified)     & 4467314 & 2115775 & 2.111 \\
\bottomrule
\end{tabular}
\end{table}

This is the central result of the study, and it is now a reversal rather than a widening. Under the traditional scheme, the bidder role is actually \emph{disadvantaged} relative to the defending side: the bidder:defender ratio is 0.674 (the bidder earns roughly two-thirds of what an average defender earns). Under the modified scheme this relationship reverses sharply: the bidder:defender ratio rises to 2.111 - the bidder now earns roughly twice what an average defender earns. This describes a full reversal of which role is favored, not merely an amplification of an existing bidder advantage. Because the defender total is numerically identical across both panels (2{,}115{,}775 by construction, as the underlying deal distribution is held fixed), the entire divergence in the bidder:defender ratio is attributable to how the modified scheme compensates the bidder role.

 \section{A Game Theoretic Prelude Before Further Analysis}
 For information set $I\in\I_i$ with acting player $i$, strategy
profile $\sigma$, and action $a$, define the counterfactual value
\[
v_i(\sigma, I) \;=\; \sum_{h\in I}\ \sum_{z\in\Z,\, z\sqsupseteq h}
\pi^\sigma_{-i}(h)\,\pi^\sigma(h,z)\,u_i(z),
\]
where $\pi^\sigma_{-i}(h)$ is the probability of reaching $h$ under
$\sigma$ excluding player $i$'s own action probabilities (chance and
the other players' contributions included), and $\pi^\sigma(h,z)$ is
the probability of $z$ given $h$ under $\sigma$. Let
$\sigma|_{I\to a}$ denote $\sigma$ with player $i$ forced to play $a$
at $I$. Player $i$'s cumulative counterfactual regret for action $a$
at $I$ through iteration $T$ is
\[
R_i^T(I,a) \;=\; \sum_{t=1}^{T} \Big(v_i\big(\sigma^t|_{I\to a},\,I\big) - v_i(\sigma^t, I)\Big),
\]
and regret matching sets
$\sigma^{t+1}(I,a) \propto \max\{R_i^t(I,a),0\}$ (uniform if all
regrets are $\le 0$).

Now we would look into a couple of interesting theorems which would tell us that the \emph{no regret dynamics} eventually converges to a \emph{coarse correlated equilibrant}, not any Nash equilibrium.

\begin{theorem}[Zinkevich et al.]
\label{thm:zinkevich}
\cite{cfr}
If each player uses regret matching, then
$R_i^T(I) = O(\sqrt{T})$ for every $I$, so average regret
$R_i^T(I)/T \to 0$. Consequently the time-averaged strategy profile
$\bar\sigma^T$ satisfies: in a two-player zero-sum game,
$\bar\sigma^T$ is an $\varepsilon_T$-Nash equilibrium with
$\varepsilon_T\to0$.
\end{theorem}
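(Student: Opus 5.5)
The argument has two parts, following Zinkevich et al.: a local regret bound at each information set, and then a decomposition that turns the local bounds into a bound on each player's overall regret and hence into an equilibrium guarantee for $\bar\sigma^T$.

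\textbf{Step 1 (local regret bound).} At each information set $I\in\I_i$, regret matching on the counterfactual regrets $R_i^T(I,a)$ is an instance of Blackwell approachability. The plan is:
\begin{enumerate}
\item Define $\Delta_u=\max_z u_i(z)-\min_z u_i(z)$, which is finite because $\Z$ is finite and the payoffs are bounded. Let $A(I)$ be the action set at $I$.
\item Write $r^t(I,a)=v_i(\sigma^t|_{I\to a},I)-v_i(\sigma^t,I)$ for the instantaneous regret, and $R^{T,+}(I,a)=\max\{R_i^T(I,a),0\}$.
\item Use the regret-matching property $\sum_a \sigma^{t+1}(I,a)\,r^{t+1}(I,a)=0$. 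This shows that the cross term vanishes in the expansion of $\sum_a \big(R^{T+1,+}(I,a)\big)^2$.
\item Conclude, by induction on $T$, that
\[
\sum_a \big(R^{T,+}(I,a)\big)^2\le T\,\Delta_u^2\,|A(I)|.
\]
\item Take square roots to get $R_i^T(I)=\max_a R_i^T(I,a)\le \Delta_u\sqrt{|A(I)|}\sqrt{T}=O(\sqrt T)$. Dividing by $T$ gives $R_i^T(I)/T\to0$.
\end{enumerate}
This step uses only finiteness of $\Z$ and bounded payoffs. It therefore applies to $\Gamma$ as it stands, even though $\Gamma$ is not zero-sum.

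\textbf{Step 2 (decomposition: the main obstacle).} Next, bound player $i$'s full-game external regret,
\[
R_i^T=\max_{\sigma_i'}\sum_t\big(u_i(\sigma_i',\sigma_{-i}^t)-u_i(\sigma^t)\big),
\]
by the sum of positive counterfactual regrets,
\[
R_i^T\le \sum_{I\in\I_i} R_i^{T,+}(I).
\]
I will prove this by induction over the information-set tree of player $i$, from the leaves upward. The key fact is perfect recall: the bid history is public, and each player remembers their own type. This lets the value of a deviation $\sigma_i'$ be split into two parts at the current information set:
\begin{itemize}
\item the regret from deviating at the current information set only;
\item the counterfactual values of the deviation at descendant information sets, each weighted by $\pi^{\sigma'}_i$, which is at most $1$.
\end{itemize}
I expect this step to be the main difficulty. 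The reason is that $v_i$ weights histories by $\pi_{-i}$, which includes both chance nodes of $\Gamma$, so the telescoping must be checked carefully across the deal node and the trick node. Combining with Step 1 gives
\[
\frac{R_i^T}{T}\le \frac{\Delta_u\,|\I_i|\sqrt{\max_I|A(I)|}}{\sqrt T}\longrightarrow 0 .
\]

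\textbf{Step 3 (equilibrium conclusion).} Specialize to a two-player zero-sum game. Because payoffs are bilinear in realization-plan form, $\bar\sigma^T$ (averaged with reach weights) achieves each player's average payoff against the opponent's average strategy. Hence
\[
\max_{\sigma_1'}u_1(\sigma_1',\bar\sigma_2^T)+\max_{\sigma_2'}u_2(\bar\sigma_1^T,\sigma_2')\le \frac{R_1^T+R_2^T}{T}.
\]
Using $u_2=-u_1$, the left side is the sum of the two best-response gains. This makes $\bar\sigma^T$ an $\varepsilon_T$-Nash equilibrium with $\varepsilon_T=(R_1^T+R_2^T)/T\to0$.

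The zero-sum hypothesis is used only in this last step. Without it, vanishing regrets give only a coarse correlated equilibrium, which is the point the paper is setting up for the three-player general-sum game $\Gamma$.
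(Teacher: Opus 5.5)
Your proposal is correct and is essentially the argument of Zinkevich et al.\ that the paper cites without proving it. It has the same three ingredients: the regret-matching (Blackwell) bound at each information set, the perfect-recall decomposition $R_i^T\le\sum_{I\in\I_i}R_i^{T,+}(I)$, and the zero-sum conversion of vanishing average regret into an $\varepsilon_T$-Nash guarantee for the reach-weighted average. Your Step~3 makes the paper's follow-up remark concrete and needs no appeal to the minimax theorem; the fact it actually uses is $u_1(\sigma_1',\bar\sigma_2^T)=\frac1T\sum_t u_1(\sigma_1',\sigma_2^t)$ for each fixed deviation $\sigma_1'$, not that $\bar\sigma^T$ itself attains the average realized payoff.
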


The Nash conclusion in Theorem \ref{thm:zinkevich} is a
\emph{special-case corollary} of the general no-external-regret
guarantee, obtained via the minimax theorem \cite{vonneumann1928}: in a two-player zero-sum
game, the set of coarse correlated equilibrium payoffs collapses to
the unique minimax value, so ``no player has regret'' and ``each
player best-responds'' coincide. This collapse does not occur once
$u_1+u_2+u_3\ne 0$ pointwise or $n>2$; the underlying no-regret
guarantee itself descends from Blackwell's approachability theorem
\cite{blackwell1956}, which Theorem \ref{thm:zinkevich} and
Theorem \ref{thm:cce} both specialize.
\begin{theorem}[Folk theorem for no-regret dynamics]
\label{thm:cce}
\cite{hart2000}
If every player in an $n$-player general-sum game plays a sequence of
strategies with sublinear external regret, then the empirical joint
distribution of play, $\bar\mu^T = \tfrac1T\sum_{t=1}^T \sigma^t$ (as a
distribution over joint actions), converges to the set of coarse
correlated equilibria (CCE) of the game.
\end{theorem}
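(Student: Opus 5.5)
We argue by reducing to the fact that the empirical distribution $\bar\mu^T$ is an approximate CCE, with approximation error equal to the largest average external regret. Fix a player $i$ and a deviation to a fixed pure strategy $s_i'$. Let $s^t$ be the (possibly mixed) joint strategy at round $t$, and write $u_i$ multilinearly in the mixtures. By the definition of external regret,
\[
R_i^T \;=\; \max_{s_i'} \sum_{t=1}^T \big(u_i(s_i',\sigma_{-i}^t)-u_i(\sigma^t)\big).
\]
Dividing by $T$ and using linearity of expectation, this becomes
\[
\frac{R_i^T}{T} \;=\; \max_{s_i'}\Big(\E_{s\sim\bar\mu^T}[u_i(s_i',s_{-i})]-\E_{s\sim\bar\mu^T}[u_i(s)]\Big).
\]
Here $\bar\mu^T$ is the uniform mixture of the product distributions $\sigma^t$. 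The right-hand side is exactly the largest gain player $i$ can obtain against $\bar\mu^T$ by committing ex ante to $s_i'$, which is the CCE deviation gain. Writing $\varepsilon_T=\max_i R_i^T/T$, we conclude that $\bar\mu^T$ is an $\varepsilon_T$-CCE, and sublinearity gives $\varepsilon_T\to0$.

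It remains to pass from "$\varepsilon_T$-CCE with $\varepsilon_T\to 0$" to "converges to the set of CCE". The space of distributions over the finite set of joint pure strategies is a compact simplex. The CCE set is defined by finitely many linear inequalities, one for each pair $(i,s_i')$:
\[
\E_\mu[u_i(s_i',s_{-i})-u_i(s)]\le 0 .
\]
It is therefore closed and nonempty. Suppose the distance from $\bar\mu^T$ to the CCE set did not tend to $0$. Then by compactness some subsequence would converge to a limit $\mu^\ast$ at positive distance from that set. By continuity of the linear constraints and $\varepsilon_T\to0$, however, $\mu^\ast$ satisfies all of them with slack $0$, so $\mu^\ast$ is itself a CCE. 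This is a contradiction.

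The main obstacle is the extensive-form setting of the paper. CFR bounds regret per information set, not regret over full pure strategies. I would therefore first invoke the standard decomposition from Theorem~\ref{thm:zinkevich}, which says that overall external regret is at most the sum over information sets of the positive parts of the counterfactual regrets, $R_i^T\le\sum_{I\in\I_i}R_i^{T,+}(I)$. This gives $O(|\I_i|\sqrt T)$ regret for the reduced normal form. I would then note that this decomposition only uses perfect recall and does not use zero-sum or two-player structure, so it carries over verbatim to the three-player general-sum game $\Gamma$. I would also check that chance nodes are harmless: they enter $u_i$ linearly through the expectation over $\theta$ and $T$.
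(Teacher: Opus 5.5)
Your argument is correct. The paper gives no proof of this theorem: it only cites Hart and Mas-Colell and remarks that the no-regret guarantee descends from Blackwell approachability. So the useful comparison is with that cited route. The approachability machinery is what \emph{produces} sublinear regret for regret matching, and Hart and Mas-Colell's headline result is the stronger conditional-regret/correlated-equilibrium version. In the statement as written, sublinear external regret is a hypothesis, so only the implication has to be shown, and your proof does exactly that with minimal machinery. Your key identity is $R_i^T/T=\max_{s_i'}\bigl(\E_{\bar\mu^T}[u_i(s_i',s_{-i})]-\E_{\bar\mu^T}[u_i(s)]\bigr)$, which holds because $\bar\mu^T$ is a uniform mixture of product distributions and $u_i$ is multilinear. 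Together with closedness of the CCE polytope and compactness of the simplex, this is self-contained. It also makes transparent that neither two-player nor zero-sum structure enters, which is precisely the point the paper needs.

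There are two small refinements. First, external regret can be negative, since correlated play can pay a player more than any fixed commitment. So sublinearity gives $\limsup_{T}\varepsilon_T\le 0$ rather than $\varepsilon_T\to 0$; your compactness step only uses the former. Second, your last paragraph is not needed for the theorem as stated; it is the bridge to applying the theorem to CFR on $\Gamma$. There you should be explicit about which object converges to the CCE set. It is the average of the joint product distributions $\sigma_1^t\times\sigma_2^t\times\sigma_3^t$, not the product of the per-player average strategies $\bar\sigma_i^T$ that CFR reports (the object in Theorem~\ref{thm:zinkevich}). A product distribution is a CCE only if it is a Nash equilibrium, and three-player general-sum CFR does not guarantee that.
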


\begin{definition}[Coarse correlated equilibrium]
A distribution $\mu$ over joint pure strategy profiles is a CCE if for
every player $i$ and every fixed deviation $a_i'$,
\[
\E_{a\sim\mu}[u_i(a)] \;\ge\; \E_{a\sim\mu}[u_i(a_i',a_{-i})].
\]
\end{definition}

CCE is strictly weaker than Nash: it only rules out a single
\emph{unconditional} deviation evaluated against the marginal
empirical distribution of others' play, whereas Nash requires every
player to best-respond at every information set, conditional on the
full strategy structure. We know that , $\Gamma$ is
general-sum in either scoring scheme - the declarer and defenders are
never paid from a common conserved pool - so regret matching on this
three-player game is only certified to reach the CCE set by Theorem~\ref{thm:cce}.

Now we would formally define Nash Equilibrium for our desired game and would prove its existence. 

\begin{definition}[Exploitability]
\label{def:exploit}
For a strategy profile $\sigma$, player $i$'s exploitability is
\[
e_i(\sigma) \;=\; \max_{\sigma_i'} u_i(\sigma_i',\sigma_{-i}) \;-\; u_i(\sigma).
\]
$\sigma$ is a Nash equilibrium iff $e_i(\sigma)=0$ for every $i$.
\end{definition}
\begin{definition}[Expected utility of a strategy profile]
\label{def:expectedutility}
For a behavior-strategy profile $\sigma=(\sigma_1,\sigma_2,\sigma_3)$,
player $i$'s expected utility in $\Gamma$ is
\begin{equation}
\label{eq:expectedutility}
U_i(\sigma) \;=\; \E_{\theta}\ \E_{a\sim\sigma(\cdot\mid\theta)}\
\E_{T\sim P(\cdot\mid s(\theta))}\big[u_i(h(\theta,a),T)\big],
\end{equation}
where the outer expectation is over the uniform deal $\theta$, the
middle expectation is over bidding/doubling actions drawn from
$\sigma$ at each information set, and the inner expectation is over
the outcome chance node of Definition \ref{def:trickchance}.
\end{definition}
\begin{definition}[Nash equilibrium of $\Gamma$]
\label{def:NE}
A behavior-strategy profile $\sigma^\ast$ is a Nash equilibrium of
$\Gamma$ if for every player $i$ and every alternative behavior
strategy $\sigma_i'$,
\[
U_i(\sigma_i^\ast,\sigma_{-i}^\ast) \;\ge\; U_i(\sigma_i',\sigma_{-i}^\ast).
\]
Equivalently, writing $e_i$ for the exploitability but with
$u_i$ there understood via $U_i$: $e_i(\sigma^\ast)=0$ for every $i$.
\end{definition}
\begin{definition}[$\varepsilon$-Nash equilibrium]
$\sigma$ is an $\varepsilon$-Nash equilibrium of $\Gamma$ if
$\max_i e_i(\sigma) \le \varepsilon$, with $e_i$ as in \S5 (evaluated
against $U_i$).
\end{definition}

\begin{theorem}[Existence]
\label{thm:nashexist}
$\Gamma$ possesses at least one Nash equilibrium in behavior
strategies.
\end{theorem}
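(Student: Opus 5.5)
The plan is to reduce existence in behavior strategies to Nash's theorem for finite normal-form games, and then transfer the resulting mixed equilibrium back via Kuhn's theorem.

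\textbf{Finiteness.} First we check that $\Gamma$ is a finite extensive-form game. The deal node has $3^3=27$ outcomes. Each bidding information set offers the six actions of $A_{\text{bid}}$. Each doubling or redoubling decision offers at most two actions, and there are finitely many bid histories. The outcome chance node takes values in $\{0,\dots,13\}$. Hence $\Hset$, $\Z$ and every $\I_i$ are finite.

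The trick-outcome chance node can be integrated out. Replace each pre-outcome history $h$ by a terminal node with payoff $\bar u_i(h)=\sum_{t}P(t\mid s(\theta))\,u_i(h,t)$. This leaves $U_i$ in \eqref{eq:expectedutility} unchanged, and it does not need any independence structure.

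\textbf{Perfect recall.} Next we verify perfect recall for each player. Player $i$'s information set is determined by $\theta_i$, the public bid history, and (for the second defender) the first defender's public doubling decision. Player $i$ observes all of these and never forgets them. In particular, $i$'s own earlier actions are public and therefore remembered. Any two histories in the same information set thus share $i$'s sequence of information sets and actions. This is the step needing the most care, chiefly the extended redouble sub-stage of the doubling-multiplier definition: the declarer who may redouble must be seen to observe the double and its own earlier bid. Since all of this is public, it holds.

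\textbf{Applying Nash and Kuhn.} Form the reduced normal form of $\Gamma$. Each player's pure strategies assign an action to every information set, so there are finitely many. The expected payoffs $U_i$ are multilinear in mixed strategies. Nash's theorem then gives a mixed-strategy equilibrium $\mu^\ast$; equivalently, apply Kakutani's theorem to the best-response correspondence on the product of simplices, which is nonempty, convex-valued and has a closed graph.

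By Kuhn's theorem, perfect recall means each mixed strategy $\mu_i^\ast$ has a realization-equivalent behavior strategy $\sigma_i^\ast$. So $U_i(\sigma^\ast)=U_i(\mu^\ast)$, and for every deviation $\sigma_i'$ we have $U_i(\sigma_i',\sigma_{-i}^\ast)=U_i(\mu_i',\mu_{-i}^\ast)$ for some mixed $\mu_i'$. The equilibrium inequalities therefore transfer to behavior strategies, which is exactly Definition~\ref{def:NE}.

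The main obstacle is confirming perfect recall, since Kuhn's equivalence fails without it. General-sum payoffs and the presence of three players play no role in the argument.
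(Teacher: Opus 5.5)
Your proposal is correct and takes essentially the same route as the paper: show that $\Gamma$ is a finite extensive-form game with perfect recall, apply Nash's theorem to its normal form, and use Kuhn's theorem to turn the mixed equilibrium into a behavior-strategy equilibrium. Your explicit check of the redouble sub-stage and your realization-equivalence argument for transferring deviations are more careful than the paper's one-line appeal to Kuhn, but the argument is the same.
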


\begin{proof}
$\Gamma$ has finitely many players, types, actions at each
information set, and chance outcomes at each chance node, hence a
finite extensive form. $\Gamma$ has perfect recall: by construction
(\S1) each player's information sets only refine along any play, and
no player forgets a past action or observation of their own. By
Kuhn's theorem \cite{Kuhn}, the behavior-strategy and
mixed-strategy equilibrium sets of a finite game of perfect recall
coincide. The induced normal form of $\Gamma$ is a finite strategic-form
game, so by Nash's existence theorem \cite{nash1951} it has at least
one equilibrium in mixed strategies, which by Kuhn's theorem
corresponds to a behavior-strategy equilibrium of $\Gamma$.
\end{proof}

\begin{proposition}[Conditional dominance and distributional dependence]
\label{prop:conddominance}
Fix a history at which player $\beta$ has won the auction with bid
$b$, and consider the counterfactual in which $\beta$ instead wins
with a lower bid $b'<b$, where
$b,b'\in\{7,\ldots,13\}$.
All other components of the history are held fixed.

Let
\[
\Delta(b,b')
=
\mathbb{E}_{T\mid s}
\left[
u_\beta(b,T)-u_\beta(b',T)
\right]
\]
denote the expected payoff gain from bidding $b$ rather than $b'$,
conditional on winning the auction. Then:

\begin{enumerate}
\item Under the old scoring scheme, the bonus component associated
with $\phi_{\mathrm{old}}$ is bid-invariant and therefore cancels from
the comparison between $b$ and $b'$. The remaining difference is
determined exactly by the base-contract payoff and the undertrick
penalty.

\item In particular, writing $v$ for the trick value and $D$ for the
doubling multiplier, the base-contract contribution from increasing
the bid from $b'$ to $b$ is
\[
(b-b')vD,
\]
while the additional undertrick penalty at realized trick count $t$
is
\[
25D\bigl(b-t\bigr)
-
25D\bigl(b'-t\bigr)
=
25D(b-b')
\]
whenever $t<b'$.
Consequently, under the old scheme the comparison can be evaluated
exactly from the made/down indicators; it is not necessary to bound
the contribution of the overtrick region.

\item Under the corrected new scoring scheme, the corresponding
base-contract gain is
\[
(b-b')vD
\]
together with the difference in the overtrick rewards generated by the
two bids. Hence
\[
\Delta(b,b')
\]
depends on the distribution of the realized number of tricks
$T\sim P(\,\cdot\mid s)$ through the overtrick region as well as the
undertrick region. In general, its sign cannot be determined from the
scoring rule alone and is therefore distribution-dependent.

\end{enumerate}
\end{proposition}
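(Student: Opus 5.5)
I will prove the three parts by decomposing the expected gain $\Delta(b,b')$ over the partition of the trick count into three regions. With $b'<b$, the support $\{0,\dots,13\}$ of $T$ splits into the \emph{both-down} region $\{t<b'\}$, the \emph{split} region $\{b'\le t<b\}$ (only the lower contract is made), and the \emph{both-made} region $\{t\ge b\}$. Linearity of expectation then gives
\[
\Delta(b,b')=\sum_{t=0}^{13}P(T=t\mid s)\bigl[u_\beta(b,t)-u_\beta(b',t)\bigr],
\]
so it suffices to compute the pointwise difference on each region, separately for the old and the new scheme, with $D$ and the suit held fixed as the statement requires.

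For part 1, I will reuse Step 1 of the proof of Proposition~\ref{prop:dominance}. The bonus $\phi_{\mathrm{old}}(t)$ depends only on $t$ and is supported on $\{12,13\}$. Whenever both contracts are made, it therefore appears identically in both payoffs and cancels pointwise. The two edge cases, $b=13$ and $b'\le 12$ on the split region, are exactly where a bonus term could survive, and I will record this explicitly as the case analysis of that proof did. On the both-made region, the old-scheme made payoff $(b-6)vD+I(D)$ carries no overtrick term. The difference there is therefore $(b-b')vD$, which is the base-contract claim of part 2.

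For the undertrick claim of part 2, I will apply \eqref{eq:score-down} on the both-down region. There the penalty difference is $25D(b-t)-25D(b'-t)=25D(b-b')$, independent of $t$. On the split region, the difference is the single made-versus-down comparison. Consequently, the old-scheme $\Delta$ reduces to a combination of $P(T\ge b\mid s)$, $P(b'\le T<b\mid s)$ and $P(T<b'\mid s)$ with explicit coefficients, plus the split-region terms, and no bound on an overtrick tail is needed.

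Before writing part 2 up, I must reconcile it with the paper's payoff convention. Under the stated terminal payoffs, the declarer receives $0$ on a failed contract and the penalty is paid to the defenders. The ``undertrick penalty'' difference in part 2 should therefore be read as the difference in the defenders' award, that is, the cost charged against the declarer's side in a comparative accounting. I will state this reading explicitly and verify that the identity $25D(b-b')$ holds under it.

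For part 3, I will redo the both-made region with \eqref{eq:score-made-new}. The pointwise difference is
\[
(b-b')vD-(b-b')\tfrac{v}{2}D=(b-b')\tfrac{v}{2}D,
\]
because the lower bid collects $(b-b')$ extra overtricks at half value. On the split region, the lower bid's payoff $b'vD+(t-b')\tfrac v2D$ depends on $t$, and the corrected bonus $\phi(b,t)$ is no longer bid-invariant. Both effects make $\Delta$ a genuine weighted sum against $P(\cdot\mid s)$ over the overtrick region.

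For the claim that the sign is not determined by the scoring rule alone, I plan to exhibit two trick distributions that give opposite signs for the same pair $(b,b')$. A point mass at $T=b$ gives $\Delta>0$. Mass concentrated on $T\in[b',b)$ gives $\Delta<0$, since the higher bid then fails and earns $0$ while the lower bid is made. I will then note that both configurations are approached by \eqref{eq:trickdist} for suitable $\mu(s)$.

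The main obstacle is that last step. The family \eqref{eq:trickdist} has fixed $\sigma$ and $\mu(s)\in[5.4,7.6]$, so within the model only the negative sign may actually be realized for slam-level $b$. I will therefore phrase the indeterminacy claim \emph{in general}, meaning over admissible trick distributions, which is how the statement words it.
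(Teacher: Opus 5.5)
Your proposal is correct, and its skeleton matches the paper's. You split the support of $T$ into $\{T<b'\}$, $\{b'\le T<b\}$ and $\{T\ge b\}$, cancel the bonus, read off $(b-b')vD$ on the both-made region and the $t$-free penalty difference on the both-down region, and locate the new scheme's extra $t$-dependence in the overtrick term. Where you depart from the paper, you are more careful than it is:
\begin{itemize}
\item The paper asserts that the $\phi_{\mathrm{old}}$ term cancels because ``all other components are held fixed''. But the bonus is paid only on a made contract. For $b=13$ the lower bid alone collects $50D$ at $t=12$, so $\Delta$ retains $-50D\,P(T=12\mid s)$. You are right that part~1 holds as stated only for $b\le 12$; this is the same asymmetry recorded in Step~1 of Proposition~\ref{prop:dominance}.
\item The paper's proof folds the undertrick penalty into the declarer's payoff, contradicting its own convention that $u_\beta=0$ on a failed contract. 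You catch this.
\item The paper merely asserts that the sign of $\Delta$ is distribution-dependent. Your two distributions (mass at $T=b$ versus mass on $[b',b)$) actually prove it.
\end{itemize}

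There are a few refinements to make. First, commit to the literal $\Delta$ built from $u_\beta$, and treat $25D(b-t)-25D(b'-t)=25D(b-b')$ as an identity about $g_D$ that does not enter $\Delta$. Then the both-down region contributes nothing, and
\[
\Delta_{\mathrm{old}}=(b-b')vD\,P(T\ge b\mid s)-g_B^\circ(b',D)\,P(b'\le T<b\mid s)-50D\,\mathbbm{1}[b=13]\,P(T=12\mid s),
\]
which genuinely depends only on made/down probabilities plus one atom. If you instead fold the penalty into $\Delta$ via your ``comparative accounting'', the split region also carries $-25D\,\E\bigl[(b-T)\mathbbm{1}\{b'\le T<b\}\bigr]$. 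That term depends on how the mass is spread inside $[b',b)$ once $b-b'\ge 2$, so part~2's made/down-indicator conclusion would fail.

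Second, your opposite-sign pair flips $\Delta_{\mathrm{old}}$ too. So what actually separates part~3 from part~2 is the $t$-dependent payment $(t-b')\tfrac{v}{2}D$ to the lower bid on $[b',b)$. On $\{T\ge b\}$ the overtrick difference is just the constant $-(b-b')\tfrac{v}{2}D$. State this explicitly.

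Finally, drop the claim that \eqref{eq:trickdist} ``approaches'' your two configurations: with $\sigma=1.4$ fixed, no member of that family is near a point mass. Your obstacle is smaller than you fear, though, because the slam bonus does not scale with $v$. For $D=1$ the new scheme gives
\[
\Delta(12,11)=\bigl(\tfrac{v}{2}+50\bigr)P(T\ge 12\mid s)-11v\,P(T=11\mid s).
\]
With $v=\tfrac12$ this is positive at $\mu=7.6$ and negative at $\mu=5.4$, so the sign flips inside the model itself. For realistic trick values your caveat stands, and the ``in general'' reading is all the paper's own proof supports anyway.
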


\begin{proof}
Consider first the old scoring scheme. Since the comparison is made
conditional on the same history and all components other than the
winning bid are held fixed, the $\phi_{\mathrm{old}}$-bonus component
is identical under bids $b$ and $b'$. It therefore cancels in the
difference
\[
\Delta(b,b')
=
\mathbb{E}_{T\mid s}
\left[
u_\beta(b,T)-u_\beta(b',T)
\right].
\]

The remaining payoff consists of the contract component and the
undertrick penalty. Increasing the bid from $b'$ to $b$ increases the
contract requirement by $b-b'$ tricks. The corresponding change in the
contract component is
\[
(b-b')vD,
\]
which strictly favors the higher bid $b$ whenever $vD>0$.

On the other hand, if the realized number of tricks is $t$ and the
contract is defeated, the undertrick penalty under the old scheme is
\[
25D(b-t).
\]
For the lower bid $b'$, the corresponding penalty is
\[
25D(b'-t).
\]
Thus, on any outcome for which both contracts are defeated,
the additional penalty incurred by bidding $b$ is
\[
25D(b-t)-25D(b'-t)
=
25D(b-b').
\]
Hence the old-scheme payoff difference can be evaluated exactly by
partitioning the support of $T$ into the regions in which both bids
are defeated, only the higher bid is defeated, or both bids are made.
In particular, no additional assumption about the shape of
$P(T\mid s)$ is required to write the exact comparison.

For the new scoring scheme, the situation is different. In addition
to the base-contract term, a successful contract receives an
overtrick reward. Since the number of overtricks is
\[
(T-b)_+
=
\max\{0,T-b\},
\]
the difference between the overtrick rewards under bids $b$ and $b'$
contains terms of the form
\[
\mathbb{E}_{T\mid s}
\left[
(T-b)_+-(T-b')_+
\right].
\]
This expectation depends on the probability mass assigned to the
different realized trick counts in the overtrick region. Consequently,
the comparison cannot, in general, be reduced to a bid-invariant
constant or to the base-contract term alone.

Therefore, under the corrected new scoring scheme, the sign of
\[
\Delta(b,b')
\]
depends on the full conditional distribution
$P(\,\cdot\mid s)$ of the realized number of tricks. Proposition
\ref{prop:dominance} alone therefore does not establish that the
lower bid is optimal, nor does it establish that the higher bid is
optimal. The optimality question must instead be resolved by
evaluating the corresponding expected payoff difference.
\end{proof}

Existence of \emph{some} Nash equilibrium (Theorem
\ref{thm:nashexist}) does not supply an efficient method for computing
one, or even an $\varepsilon$-close one, once $n\ge3$ and the game is
general-sum. Unlike the two-player case, where Lemke-Howson
\cite{lemkehowson1964} gives a finite pivoting algorithm, no
analogous finite algorithm is known for $n\ge3$; best-response
dynamics and fictitious play are not guaranteed to converge outside
restrictive classes of games \cite{daskalakis2009}. CFR remains the
right practical tool \emph{precisely because} it is guaranteed, by
Theorem \ref{thm:cce}, to reach the CCE set at a known rate. The
exploitability check(i.e, Definition~\ref{def:exploit}) then upgrades the claim to Nash \emph{iff}
the check passes. 

It is very important and interesting to note that no
modification can be made within this
algorithmic family, that upgrades this to a Nash guarantee in
general, because computing a Nash equilibrium of a general-sum game
with $n\ge3$ players (or even 2 players, non-zero-sum) is
$\mathsf{PPAD}$-complete \cite{daskalakis2009,chen2009}: no
polynomial-time algorithm is known or expected, and no-regret dynamics
are not an exception to this barrier - CCE is precisely the strongest
guarantee such dynamics can offer without solving a
$\mathsf{PPAD}$-hard problem as a subroutine. 

\section{Strategic-Form Reduction and Nash Equilibrium Analysis}

To move from a purely distributional comparison to a genuinely game-theoretic one, we reduce the auction to a $2\times2$ strategic-form game~\cite{Kuhn}. The bidder chooses between a \emph{Conservative} bidding rule (bid to the standard HCP-table level) and an \emph{Aggressive} rule (bid one level higher), while the defending side chooses between a \emph{Conservative} doubling threshold (double only holdings of 23+ combined high-card points) and an \emph{Aggressive} threshold (double at 17+). This treatment of the auction as a bimatrix game between bidder and defense follows the general approach of modeling bridge bidding as a decision problem under imperfect information~\cite{inproceedings}. Bimatrix payoffs - (bidder payoff, average defender payoff) - are reported separately for the traditional and modified scoring schemes in Table~\ref{tab:nash_old} and Table~\ref{tab:nash_new}.

\begin{table}[htbp]
\centering
\caption{Strategic-Form Payoff Matrix, Old (Traditional) Scheme: (Bidder, Defender) Payoffs}
\label{tab:nash_old}
\begin{tabular}{lcc}
\toprule
 & \multicolumn{2}{c}{Defender Strategy} \\
\cmidrule(lr){2-3}
Bidder Strategy & Conservative & Aggressive \\
\midrule
Conservative & (11.0,\ 20.8) & (24.1,\ 20.3) \\
Aggressive   & (6.4,\ 42.0) & (13.4,\ 53.2) \\
\bottomrule
\end{tabular}
\end{table}

\begin{table}[htbp]
\centering
\caption{Strategic-Form Payoff Matrix, New (Modified) Scheme: (Bidder, Defender) Payoffs}
\label{tab:nash_new}
\begin{tabular}{lcc}
\toprule
 & \multicolumn{2}{c}{Defender Strategy} \\
\cmidrule(lr){2-3}
Bidder Strategy & Conservative & Aggressive \\
\midrule
Conservative & (38.7,\ 18.2) & (56.2,\ 25.7) \\
Aggressive   & (20.7,\ 37.8) & (32.5,\ 51.0) \\
\bottomrule
\end{tabular}
\end{table}

Best-response analysis identifies a unique pure-strategy Nash equilibrium under each scheme, consistent with the existence of equilibrium-point strategies established for best-defence models of Bridge~\cite{FRANK199887}. The equilibrium location shifts between schemes, but - notably - the bidder's own strategy does not:

\begin{itemize}
    \item \textbf{Old scheme:} Conservative bidding is a strictly dominant strategy for the bidder (11.0 $>$ 6.4 against a Conservative defender; 24.1 $>$ 13.4 against an Aggressive defender). Given a Conservative bidder, the defender's best response is Conservative doubling (20.8 $>$ 20.3, marginally). The unique pure NE is \textbf{(Conservative, Conservative)}.
    \item \textbf{New scheme:} Conservative bidding remains a strictly dominant strategy for the bidder (38.7 $>$ 20.7 against a Conservative defender; 56.2 $>$ 32.5 against an Aggressive defender) - the bidder's dominant strategy is unchanged from the old scheme. What changes is the defender's side: Aggressive doubling now strictly dominates Conservative doubling for the defender (25.7 $>$ 18.2 against a Conservative bidder; 51.0 $>$ 37.8 against an Aggressive bidder). The unique pure NE is \textbf{(Conservative, Aggressive)}.
\end{itemize}

This equilibrium reversal is the sharpest game-theoretic signature of the scoring redesign in the reduced form: rescaling the bidder's payoff function (the defender's payoff matrix is unchanged between Table~\ref{tab:nash_old} and Table~\ref{tab:nash_new} only in the sense that it is separately specified for each scheme, not held literally fixed cell-by-cell) is sufficient to flip the defender's best response, and indeed dominant strategy, from passive (Conservative) to aggressive doubling, even though the bidder's own equilibrium strategy does not change at all in this reduced form. Section~\ref{sec:gscfr} revisits this claim against a full extensive-form solution of the same underlying game, and finds that the bidder's strategy \emph{does} change once the action space is enriched beyond the binary Conservative/Aggressive choice used here.

Table~\ref{tab:paired_strategy} reports the corresponding large-sample ($n=100{,}000$ deals per cell), paired-comparison estimates of bidder and defender payoffs under both scoring schemes simultaneously, for all four strategy profiles.

\begin{table}[htbp]
\centering
\caption{Paired Comparison of Bidder and Defender Payoffs across Strategy Profiles ($n=100{,}000$ deals per cell)}
\label{tab:paired_strategy}
\begin{tabular}{llcccc}
\toprule
Bidder Strategy & Defender Strategy & Bidder (Old) & Bidder (New) & Defender (Old) & Defender (New) \\
\midrule
Conservative & Conservative & 36.326 & 39.574 & 19.500 & 19.500 \\
Conservative & Aggressive   & 42.243 & 49.739 & 25.375 & 25.375 \\
Aggressive   & Conservative & 33.998 & 35.872 & 38.800 & 38.800 \\
Aggressive   & Aggressive   & 36.487 & 41.236 & 51.625 & 51.625 \\
\bottomrule
\end{tabular}
\end{table}
 Two regularities were drawn from Table~\ref{tab:paired_strategy} in the earlier draft. First, the bidder's payoff was reported to increase under the new scheme in every strategy profile (a uniform, profile-independent upward shift). Second, the size of the increase was reported to vary across profiles, ranging from roughly 1.9 points (Aggressive/Conservative) to roughly 7.5 points (Conservative/Aggressive), suggesting the bonus interacts with contract level and/or doubling status rather than being a flat per-deal add-on. Whether these regularities still hold at the same magnitude as the revised bidder payoffs in Table~\ref{tab:nash_old}-\ref{tab:nash_new} cannot be confirmed without re-running this large-sample estimate.

 \section{Zero-Sum Test: The Game Is Not Constant-Sum}

A natural question for a game-theoretic audience is whether the reduced strategic-form game is zero-sum (or, more generally, constant-sum), in which case any bidder-favoring reallocation would necessarily come dollar-for-dollar out of the defenders, and the Nash equilibrium could be analyzed purely via minimax reasoning. Table~\ref{tab:zerosum} reports, for each strategy profile and each scheme, the total surplus generated - computed as the bidder's payoff plus \emph{both} defenders' payoffs (i.e., bidder $+\, 2\times$ average defender payoff, recovering the full three-player split rather than the per-defender average used elsewhere in this paper), using the revised bidder and defender payoffs from Table~\ref{tab:nash_old} and Table~\ref{tab:nash_new}.

\begin{table}[htbp]
\centering
\caption{Total Surplus by Strategy Profile: Evidence Against a Constant-Sum Game}
\label{tab:zerosum}
\begin{tabular}{llcc}
\toprule
Bidder Strategy & Defender Strategy & Total Surplus (Old) & Total Surplus (New) \\
\midrule
Conservative & Conservative & 52.6  & 75.1 \\
Conservative & Aggressive   & 64.7  & 107.6 \\
Aggressive   & Conservative & 90.4 & 96.3 \\
Aggressive   & Aggressive   & 119.8 & 134.5 \\
\bottomrule
\end{tabular}
\end{table}

If the game were constant-sum, every entry in a given column of Table~\ref{tab:zerosum} would be identical, since a fixed pool of points would simply be redivided between bidder and defenders under each strategy profile. Instead, total surplus under the old scheme ranges from 52.6 (Conservative/Conservative) to 119.8 (Aggressive/Aggressive) - an increase of roughly 128\% across the strategy space - and the new scheme shows an analogous range, from 75.1 (Conservative/Conservative) to 134.5 (Aggressive/Aggressive), an increase of roughly 79\%. This confirms that \textbf{the game is not zero-sum (nor constant-sum) under either scoring scheme}: aggressive bidding paired with aggressive doubling generates substantially more total points on the table than conservative play by either side, most plausibly reflecting doubled/redoubled contract bonuses and undertrick penalties that scale super-linearly with contract level rather than merely transferring a fixed pool of points between declarer and defense.

The relative ordering of the two mixed profiles also shifts between schemes. Under the traditional scheme, an Aggressive bidder facing a Conservative defender generates more total surplus than a Conservative bidder facing an Aggressive defender (90.4 vs.\ 64.7). Under the modified scheme this ordering reverses: Conservative-bidder/Aggressive-defender now generates more surplus than Aggressive-bidder/Conservative-defender (107.6 vs.\ 96.3). Notably, Conservative/Aggressive is also the new equilibrium profile identified in Section~5, so the modified scheme's bonus structure disproportionately inflates surplus in exactly the cell that becomes the new Nash equilibrium. Because the surplus itself is strategy-dependent, both bidder and defenders have a shared, non-adversarial interest in the \emph{level} of aggression in the game even as they remain adversarial over its \emph{division} - a structure more consistent with a general-sum (partially coordinative) game than with a purely distributive, minimax-style contest. This also means that Nash equilibrium selection in Table~\ref{tab:nash_old}-\ref{tab:nash_new} cannot be reduced to a minimax computation on a single payoff matrix, and the best-response analysis in the previous section which has been conducted separately on the bidder's and the defenders' own payoff matrices is the appropriate solution concept for this general-sum setting.

\section{Full Extensive-Form Verification via General-Sum Counterfactual Regret Minimization (GS-CFR)}
\label{sec:gscfr}

The $2\times2$ reduction in the previous section imposes two coarse, exogenously fixed action sets (Conservative/Aggressive) on both sides of the auction. To check whether the equilibrium-reversal result of previous section survives when both players are allowed to best-respond over a richer, hand-contingent strategy space, we solve the full extensive-form auction game with General-Sum Counterfactual Regret Minimization (GS-CFR), a self-play regret-matching procedure that iteratively updates each player's own behavioral strategy toward an approximate Nash equilibrium of an imperfect-information game, without assuming the two players' payoffs are opposed. This distinction is not incidental: Section~7 below establishes empirically that the induced bidder-defender game is not zero-sum or constant-sum under either scoring scheme, so a minimax-style, zero-sum-specific solver would not be an appropriate equilibrium concept here in the first place - General-Sum CFR is the correct tool precisely because bidder and defender payoffs can move together or apart depending on the strategy profile, as Table~\ref{tab:zerosum} documents directly. In this formulation the bidder's action set is an opening bid of level 7 through 13, conditioned on private hand strength (Weak, Medium, or Strong), and the opponent's action set is Pass or Double, conditioned on the observed bid level. The procedure was run for 20{,}000 iterations under each scoring scheme, with average strategies and average payoffs recorded every 5{,}000 iterations.

\begin{table}[htbp]
\centering
\caption{GS-CFR Converged Average Payoffs (20,000 self-play iterations per scheme)}
\label{tab:gscfr}
\begin{tabular}{lccc}
\toprule
Scheme & Bidder Payoff & Opponent Payoff & Bidder:Opponent Ratio \\
\midrule
Old (Traditional) & 102.510 & 67.669 & 1.51 \\
New (Modified)    & 308.822 & 20.339 & 15.18 \\
\bottomrule
\end{tabular}
\end{table}

Both runs converge quickly and stably: the bidder's average payoff moves by less than 0.01 points between the 15{,}000th and 20{,}000th iteration in both schemes (102.509 $\to$ 102.510 old; 308.818 $\to$ 308.822 new), and the opponent's average payoff is essentially flat from the 5{,}000th iteration onward (67.677 $\to$ 67.669 old; 20.356 $\to$ 20.339 new). At convergence, the bidder's equilibrium payoff rises from 102.510 under the traditional scheme to 308.822 under the modified scheme, an increase of \textbf{201.3\%}.

\noindent\textbf{Equilibrium Strategy: From Full Separation to Partial Pooling~}
Unlike in an earlier verification run, in which both scoring schemes were found to converge to an identical, fully pooling bidder strategy, the average behavioral strategies reported here \emph{differ in structure} between the two schemes:

\begin{itemize}
    \item \textbf{Old (Traditional) scheme:} the bidder's strategy is fully separating - each hand-strength type maps to a distinct opening bid with probability 1.00 (Weak $\to$ level 8, Medium $\to$ level 9, Strong $\to$ level 10). An observer could recover the bidder's hand-strength category exactly from the bid alone.
    \item \textbf{New (Modified) scheme:} the bidder's strategy is partially pooling - Weak and Medium holdings both open at level 7 with probability 1.00, while only Strong holdings separate, opening at level 8. Under the modified scheme, the defense can no longer distinguish a Weak from a Medium hand based on the opening bid alone.
    \item \textbf{Opponent, both schemes:} the opponent doubles with probability 1.00 following every observed bid level (7 through 13) - a corner solution, rather than the threshold rule assumed in the Conservative/Aggressive dichotomy of Section~5, and unchanged in structure across the two scoring schemes.
\end{itemize}

Two features of this result are worth emphasizing. First, the modified scheme induces \emph{less}, not more, informative bidding: two of the three hand-strength types are pooled together, whereas the traditional scheme's equilibrium reveals hand strength perfectly. This is consistent with a scoring bonus that rewards the bidder for contracting at a lower, safer level largely independent of true hand strength, reducing the incentive to bid up on genuinely stronger holdings. Second, the opponent's equilibrium response remains a corner solution - unconditional doubling - in both schemes, so the entire behavioral difference between schemes at the full-game equilibrium is concentrated on the bidder's side.

\noindent\textbf{Implications for Equilibrium Bidding Behavior}
The distinction between the traditional and modified games is not merely cosmetic. In the traditional four-player game, declarer and dummy form a fixed, repeated partnership within a deal, so any bidder-favoring skew in the scoring rule is at least partially internalized by a stable coalition. In the three-player dynamic-partnership variant, partnerships and opposing coalitions are redrawn each match, and strategy is explicitly not shared across matches without an incentive-compatibility penalty. This constraint removes the possibility of tacit or explicit collusion signaling between rounds, pushing the induced game closer to a sequence of one-shot, incomplete-information subgames rather than a genuinely repeated game admitting cooperative, folk-theorem-style equilibria.

Under this structure, a scoring rule that \emph{reverses} the bidder:defender payoff balance - as the modified scheme demonstrably does, moving from a defender-favoring split (ratio 0.674) to a strongly bidder-favoring split (ratio 2.111) - carries sharper strategic consequences than an equivalent shift would in the partnership-stable traditional game. Since no player can pre-commit or coordinate with a future partner, each agent's dominant-strategy calculus is evaluated purely on the marginal incentive the scheme creates for seeking (or avoiding) the bidder role within a given match, independent of reputation or side-payment considerations. The falling Jain's Index and rising Gini coefficient for the bidder-defender split, together with the 201.3\% equilibrium payoff increase for the bidder documented in Section~\ref{sec:gscfr}, are therefore not merely a fairness footnote: they constitute converging distributional and game-theoretic evidence that the modified mechanism materially increases the value of occupying the bidder role.

The full-game GS-CFR solution further shows that this reallocation is not purely a payoff-level phenomenon: the bidder's own equilibrium bidding strategy becomes \emph{less} informative under the modified scheme (partial pooling of Weak and Medium hands, versus full separation of all three hand strengths under the traditional scheme), while the defender's doubling strategy remains a corner solution - unconditional doubling - under both schemes. The defender's \emph{payoff}, however, is not invariant to the scoring change: it falls by roughly 70\% at the full-game equilibrium. This decline is best read as a downstream consequence of the bidder's own strategic response (bidding at lower, less-informative levels under the new bonus structure) rather than as evidence that the defender's own scoring formula was altered.

\section{Conclusion}

This paper presents a statistical and game-theoretic analysis of
three-player Auction Bridge under traditional and modified scoring
mechanisms. The analysis demonstrates that the design of a scoring rule
can influence not only the magnitude of payoffs, but also bidding
incentives, payoff distributions, fairness, and strategic behavior.

A central finding is the incentive distortion caused by the
bid-invariant slam bonus in the original scoring rule. Proposition~1
shows that, under the specified trick distribution, this erroneous
bonus makes lower bids strictly preferable in expected payoff to honest
slam bids. The proposed bid-dependent bonus removes this degeneracy by
linking the slam reward to the corresponding contract level. Importantly,
the subsequent analysis shows that this correction does not by itself
determine the optimal bidding level under the new scheme; the resulting
incentives depend on the full distribution of the number of tricks.

The statistical analysis further shows that the two scoring mechanisms
produce substantially different payoff distributions. The traditional
scheme exhibits greater positive skewness and excess kurtosis than the
modified scheme, while the modified mechanism shifts the distribution
toward higher bidder payoffs. Fairness measures provide a complementary
perspective. Although both schemes remain highly balanced across
physical seats, the modified scheme produces a modest increase in
seat-level inequality and a much larger shift in payoff allocation when
players are compared by strategic role. Thus, positional fairness and
role-based fairness can lead to substantially different assessments of a
scoring mechanism.

The game-theoretic analysis confirms that the scoring modification also
changes strategic incentives. In the reduced $2\times2$ strategic-form
game, each scoring scheme admits a unique pure-strategy Nash equilibrium:
the bidder retains a conservative strategy, while the defender's
equilibrium doubling behavior changes from conservative under the
traditional scheme to aggressive under the modified scheme. The
strategy-space reduction is important, however, because the full
extensive-form game permits information-dependent bidding behavior that
cannot be represented by the reduced model.

The full-game GS-CFR experiments provide further evidence of this
strategic effect. After 20,000 iterations, the average bidder payoff
increases from approximately $102.51$ under the traditional scheme to
$308.82$ under the modified scheme, while the corresponding opponent
payoff decreases from approximately $67.67$ to $20.34$. The computed
behavioral strategies also change qualitatively: the traditional scheme
produces separation across the three hand-strength types, whereas the
modified scheme produces partial pooling of weak and medium hands.
These results indicate that the scoring modification affects not only
payoff levels but also the information conveyed through bidding.

Overall, the results demonstrate that scoring-rule design in an
imperfect-information general-sum game should be evaluated from multiple
perspectives rather than by expected payoff alone. Incentive
compatibility, distributional shape, fairness, strategic role
advantages, and information transmission can all change as a consequence
of a seemingly local modification to the payoff function. The proposed
framework provides a systematic approach for studying these effects and
offers a basis for further investigation of scoring mechanisms in
three-player Auction Bridge and related imperfect-information games.
\bibliographystyle{splncs04}
\bibliography{ref}

@article{DorfManJini,
 ISSN = {00346535, 15309142},
 URL = {http://www.jstor.org/stable/1924845},
 author = {Robert Dorfman},
 journal = {The Review of Economics and Statistics},
 number = {1},
 pages = {146--149},
 publisher = {The MIT Press},
 title = {A Formula for the Gini Coefficient},
 urldate = {2026-08-06},
 volume = {61},
 year = {1979}
}

@misc{jain1998quantitativemeasurefairnessdiscrimination,
      title={A Quantitative Measure Of Fairness And Discrimination For Resource Allocation In Shared Computer Systems}, 
      author={R. Jain and D. Chiu and W. Hawe},
      year={1998},
      eprint={cs/9809099},
      archivePrefix={arXiv},
      primaryClass={cs.NI},
      url={https://arxiv.org/abs/cs/9809099}, 
}

@inproceedings{inproceedings,
author = {DeLooze, Lori and Downey, James},
year = {2007},
month = {05},
pages = {368 - 373},
title = {Bridge Bidding with Imperfect Information},
isbn = {1-4244-0709-5},
booktitle = {Proceedings of the 2007 IEEE Symposium on Computational Intelligence and Games, CIG 2007},
doi = {10.1109/CIG.2007.368122}
}

@misc{ourpaper,
  author       = {Sourish Sarkar and Aritrabha Majumdar and Moutushi Chatterjee},
  title        = {A New Three-Players AUCTION BRIDGE With Dynamic Opponents and Team Members},
  year         = {2025},
  month        = {May},
  note         = {Available at SSRN: \url{https://ssrn.com/abstract=5265725} or \url{http://dx.doi.org/10.2139/ssrn.5265725}},
  howpublished = {\url{https://ssrn.com/abstract=5265725}},
}

@book{whitehead1930auction,
  author    = {W. C. Whitehead},
  title     = {Auction Bridge Summary: The Principles of Bidding and Play for Beginners and Advanced Students of Auction Bridge},
  year      = {1930},
  publisher = {Frederick A. Stokes Company},
  address   = {New York}
}

@misc{SarkarMajumdar2025AuctionBridgeAlgorithm,
  author       = {Sourish Sarkar and Aritrabha Majumdar},
  title        = {Auction Bridge Algorithm},
  year         = {2025},
  howpublished = {\url{https://github.com/sourish-isi/Auction-Bridge-Algorithm}},
}

@article{JoanesGill1998,
    author = {Joanes, D. N. and Gill, C. A.},
    title = {Comparing Measures of Sample Skewness and Kurtosis},
    journal = {Journal of the Royal Statistical Society Series D: The Statistician},
    volume = {47},
    number = {1},
    pages = {183-189},
    year = {1998},
    month = {04},
    issn = {2515-7884},
    doi = {10.1111/1467-9884.00122},
    url = {https://doi.org/10.1111/1467-9884.00122},
    eprint = {https://academic.oup.com/jrsssd/article-pdf/47/1/183/49930927/jrsssd_47_1_183.pdf},
}

@book{Kuhn,
 ISBN = {9780691079349},
 URL = {http://www.jstor.org/stable/j.ctt1b9rzq1},
 author = {H. F. BOHNENBLUST and G. W. BROWN and M. DRESHER and D. GALE and S. KARLIN and H. W. KUHN and J. C. C. MCKINSEY and J. F. NASH and J. VON NEUMANN and L. S. SHAPLEY and S. SHERMAN and R. N. SNOW and A. W. TUCKER and H. WEYL},
 publisher = {Princeton University Press},
 title = {Contributions to the Theory of Games (AM-24), Volume I},
 urldate = {2026-08-06},
 year = {1952}
}

@article{FRANK199887,
title = {Search in games with incomplete information: a case study using Bridge card play},
journal = {Artificial Intelligence},
volume = {100},
number = {1},
pages = {87-123},
year = {1998},
issn = {0004-3702},
doi = {https://doi.org/10.1016/S0004-3702(97)00082-9},
url = {https://www.sciencedirect.com/science/article/pii/S0004370297000829},
author = {Ian Frank and David Basin}
}

@inproceedings{cfr,
author = {Zinkevich, Martin and Johanson, Michael and Bowling, Michael and Piccione, Carmelo},
title = {Regret minimization in games with incomplete information},
year = {2007},
isbn = {9781605603520},
publisher = {Curran Associates Inc.},
address = {Red Hook, NY, USA},
booktitle = {Proceedings of the 21st International Conference on Neural Information Processing Systems},
pages = {1729–1736},
numpages = {8},
location = {Vancouver, British Columbia, Canada},
series = {NIPS'07}
}

@book{coffin1956contract,
  author    = {Coffin, George Sturgis},
  title     = {Contract Bridge for Three: Rules and Tactics of Trio Bridge, the Official Form of Three-Handed Contract Bridge; Based Upon Partnership Bidding Against One Player and His Exposed Dummy},
  year      = {1956},
  publisher = {I. Washburn},
  address   = {New York}
}

@misc{sarkar20263playersauctionbridge,
      title={3 Players Auction Bridge - Statistical Algorithmic Strategies}, 
      author={Sourish Sarkar and Aritrabha Majumdar and Moutushi Chatterjee},
      year={2026},
      eprint={2608.03217},
      archivePrefix={arXiv},
      primaryClass={cs.GT},
      url={https://arxiv.org/abs/2608.03217}, 
}

@book{Hampel1986,
  author    = {Hampel, Frank R. and Ronchetti, Elvezio M. and Rousseeuw, Peter J. and Stahel, Werner A.},
  title     = {Robust Statistics: The Approach Based on Influence Functions},
  publisher = {Wiley},
  address   = {New York},
  year      = {1986},
  isbn      = {0-471-82921-8}
}

@book{VanDerVaart1998,
  author    = {van der Vaart, Aad W.},
  title     = {Asymptotic Statistics},
  publisher = {Cambridge University Press},
  address   = {Cambridge},
  year      = {1998},
  doi       = {10.1017/CBO9780511802256}
}

@article{lemkehowson1964,
  author  = {Lemke, Carlton E. and Howson, Joseph T.},
  title   = {Equilibrium Points of Bimatrix Games},
  journal = {Journal of the Society for Industrial and Applied Mathematics},
  volume  = {12},
  number  = {2},
  pages   = {413--423},
  year    = {1964}
}

@article{nash1951,
  author  = {Nash, John},
  title   = {Non-Cooperative Games},
  journal = {Annals of Mathematics},
  volume  = {54},
  number  = {2},
  pages   = {286--295},
  year    = {1951}
}

@article{vonneumann1928,
  author  = {von Neumann, John},
  title   = {Zur Theorie der Gesellschaftsspiele},
  journal = {Mathematische Annalen},
  volume  = {100},
  pages   = {295--320},
  year    = {1928}
}

@article{blackwell1956,
  author  = {Blackwell, David},
  title   = {An Analog of the Minimax Theorem for Vector Payoffs},
  journal = {Pacific Journal of Mathematics},
  volume  = {6},
  number  = {1},
  pages   = {1--8},
  year    = {1956}
}

@article{chen2009,
  author  = {Chen, Xi and Deng, Xiaotie and Teng, Shang-Hua},
  title   = {Settling the Complexity of Computing Two-Player Nash Equilibria},
  journal = {Journal of the ACM},
  volume  = {56},
  number  = {3},
  pages   = {Article 14},
  year    = {2009}
}

@article{daskalakis2009,
  author  = {Daskalakis, Constantinos and Goldberg, Paul W. and Papadimitriou, Christos H.},
  title   = {The Complexity of Computing a Nash Equilibrium},
  journal = {SIAM Journal on Computing},
  volume  = {39},
  number  = {1},
  pages   = {195--259},
  year    = {2009}
}

@article{hart2000,
  author  = {Hart, Sergiu and Mas-Colell, Andreu},
  title   = {A Simple Adaptive Procedure Leading to Correlated Equilibrium},
  journal = {Econometrica},
  volume  = {68},
  number  = {5},
  pages   = {1127--1150},
  year    = {2000}
}

\end{document}